\begin{document}
\title{Robust Online Speed Scaling With Deadline Uncertainty} 
 
\author{Goonwanth Reddy}
\affiliation{%
  \institution{Indian Institute of Technology, Madras}
  \streetaddress{Department of Electrical Engineering, Indian Institute of Technology, Madras}
  \city{Chennai, India}}
\email{ngoonwanth@gmail.com}
\author{Rahul Vaze}
\affiliation{%
  \institution{Tata Institute of Fundamental Research, Mumbai}
  \streetaddress{School of Technology and Computer Science, Tata Institute of Fundamental Research}
  \city{Mumbai, India,}}
\email{vaze@tcs.tifr.res.in}
\renewcommand\shortauthors{Reddy, G. et al}

\begin{abstract}
A speed scaling problem is considered, where time is divided into slots, and jobs with payoff $v$ arrive at the beginning of the slot with associated deadlines $d$. Each job takes one slot to be processed, and multiple jobs can be processed by the server in each slot with energy cost $g(k)$ for processing $k$ jobs in one slot. The payoff is accrued by the algorithm only if the job is processed by its deadline. We consider 
a robust version of this speed scaling problem, where a job on its arrival reveals its payoff $v$, however, the deadline is hidden to the online algorithm, which could potentially be chosen adversarially and known to the optimal offline algorithm. The objective is to derive a robust (to deadlines) and optimal online algorithm that achieves the best competitive ratio. We propose an algorithm (called min-LCR) and show that it is an optimal online algorithm for any convex energy cost function $g(.)$. We do so without actually evaluating the optimal competitive ratio,  and give a general proof that works for any convex $g$, which is rather novel. 
For the popular choice of energy cost function $g(k) = k^\alpha, \alpha \ge 2$, we give concrete bounds on the competitive ratio of the algorithm, which ranges between $2.618$ and $3$ depending on the value of $\alpha$. The best known online algorithm for the same problem, but where deadlines are revealed to the online algorithm has competitive ratio of $2$ and a lower bound of $\sqrt{2}$. Thus, importantly, lack of deadline knowledge does not make the problem degenerate, and the effect of deadline information on the optimal competitive ratio is limited.
\end{abstract}

%
%

\keywords{Online Algorithms, Speed Scaling, Greedy Algorithms, Scheduling}

\maketitle


%
\newtheorem{remark}[theorem]{Remark}

\newcommand{\figref}[1]{{Fig.}~\ref{#1}}
\newcommand{\tabref}[1]{{Table}~\ref{#1}}
\newcommand{\bookemph}[1]{ {\em #1}}
\newcommand{\Ns}{N_s}
\newcommand{\Ut}{U_t}
\newcommand{\fig}[1]{Fig.\ \ref{#1}}
\def\onehalf{\frac{1}{2}}
\def\etal{et.\/ al.\/}
\newcommand{\bydef}{\triangleq}
\newcommand{\tr}{{\it{tr}}}
\def\SNR{{\textsf{SNR}}}
\def\Pe{{P_e}}
\def\SINR{{\mathsf{SINR}}}
\def\SIR{{\mathsf{SIR}}}
\def\MI{{\mathsf{MI}}}
\def\bydef{:=}
\def\bba{{\mathbb{a}}}
\def\bbb{{\mathbb{b}}}
\def\bbc{{\mathbb{c}}}
\def\bbd{{\mathbb{d}}}
\def\bbee{{\mathbb{e}}}
\def\bbff{{\mathbb{f}}}
\def\bbg{{\mathbb{g}}}
\def\bbh{{\mathbb{h}}}
\def\bbi{{\mathbb{i}}}
\def\bbj{{\mathbb{j}}}
\def\bbk{{\mathbb{k}}}
\def\bbl{{\mathbb{l}}}
\def\bbm{{\mathbb{m}}}
\def\bbn{{\mathbb{n}}}
\def\bbo{{\mathbb{o}}}
\def\bbp{{\mathbb{p}}}
\def\bbq{{\mathbb{q}}}
\def\bbr{{\mathbb{r}}}
\def\bbs{{\mathbb{s}}}
\def\bbt{{\mathbb{t}}}
\def\bbu{{\mathbb{u}}}
\def\bbv{{\mathbb{v}}}
\def\bbw{{\mathbb{w}}}
\def\bbx{{\mathbb{x}}}
\def\bby{{\mathbb{y}}}
\def\bbz{{\mathbb{z}}}
\def\bb0{{\mathbb{0}}}

\def\bydef{:=}
\def\ba{{\mathbf{a}}}
\def\bb{{\mathbf{b}}}
\def\bc{{\mathbf{c}}}
\def\bd{{\mathbf{d}}}
\def\bee{{\mathbf{e}}}
\def\bff{{\mathbf{f}}}
\def\bg{{\mathbf{g}}}
\def\bh{{\mathbf{h}}}
\def\bi{{\mathbf{i}}}
\def\bj{{\mathbf{j}}}
\def\bk{{\mathbf{k}}}
\def\bl{{\mathbf{l}}}
\def\bm{{\mathbf{m}}}
\def\bn{{\mathbf{n}}}
\def\bo{{\mathbf{o}}}
\def\bp{{\mathbf{p}}}
\def\bq{{\mathbf{q}}}
\def\br{{\mathbf{r}}}
\def\bs{{\mathbf{s}}}
\def\bt{{\mathbf{t}}}
\def\bu{{\mathbf{u}}}
\def\bv{{\mathbf{v}}}
\def\bw{{\mathbf{w}}}
\def\bx{{\mathbf{x}}}
\def\by{{\mathbf{y}}}
\def\bz{{\mathbf{z}}}
\def\b0{{\mathbf{0}}}
\def\opt{\mathsf{OPT}}
\def\alg{\mathsf{ALG}}
\def\offalg{\mathsf{OFF-ALG}}
\def\on{\mathsf{ON}}
\def\off{\mathsf{OFF}}
\def\lcr{\mathsf{LCR}}
\def\greedy{\mathsf{Greedy}}
\def\bA{{\mathbf{A}}}
\def\bB{{\mathbf{B}}}
\def\bC{{\mathbf{C}}}
\def\bD{{\mathbf{D}}}
\def\bE{{\mathbf{E}}}
\def\bF{{\mathbf{F}}}
\def\bG{{\mathbf{G}}}
\def\bH{{\mathbf{H}}}
\def\bI{{\mathbf{I}}}
\def\bJ{{\mathbf{J}}}
\def\bK{{\mathbf{K}}}
\def\bL{{\mathbf{L}}}
\def\bM{{\mathbf{M}}}
\def\bN{{\mathbf{N}}}
\def\bO{{\mathbf{O}}}
\def\bP{{\mathbf{P}}}
\def\bQ{{\mathbf{Q}}}
\def\bR{{\mathbf{R}}}
\def\bS{{\mathbf{S}}}
\def\bT{{\mathbf{T}}}
\def\bU{{\mathbf{U}}}
\def\bV{{\mathbf{V}}}
\def\bW{{\mathbf{W}}}
\def\bX{{\mathbf{X}}}
\def\bY{{\mathbf{Y}}}
\def\bZ{{\mathbf{Z}}}
\def\b1{{\mathbf{1}}}

\def\bbA{{\mathbb{A}}}
\def\bbB{{\mathbb{B}}}
\def\bbC{{\mathbb{C}}}
\def\bbD{{\mathbb{D}}}
\def\bbE{{\mathbb{E}}}
\def\bbF{{\mathbb{F}}}
\def\bbG{{\mathbb{G}}}
\def\bbH{{\mathbb{H}}}
\def\bbI{{\mathbb{I}}}
\def\bbJ{{\mathbb{J}}}
\def\bbK{{\mathbb{K}}}
\def\bbL{{\mathbb{L}}}
\def\bbM{{\mathbb{M}}}
\def\bbN{{\mathbb{N}}}
\def\bbO{{\mathbb{O}}}
\def\bbP{{\mathbb{P}}}
\def\bbQ{{\mathbb{Q}}}
\def\bbR{{\mathbb{R}}}
\def\bbS{{\mathbb{S}}}
\def\bbT{{\mathbb{T}}}
\def\bbU{{\mathbb{U}}}
\def\bbV{{\mathbb{V}}}
\def\bbW{{\mathbb{W}}}
\def\bbX{{\mathbb{X}}}
\def\bbY{{\mathbb{Y}}}
\def\bbZ{{\mathbb{Z}}}

\def\cA{\mathcal{A}}
\def\cB{\mathcal{B}}
\def\cC{\mathcal{C}}
\def\cD{\mathcal{D}}
\def\cE{\mathcal{E}}
\def\cF{\mathcal{F}}
\def\cG{\mathcal{G}}
\def\cH{\mathcal{H}}
\def\cI{\mathcal{I}}
\def\cJ{\mathcal{J}}
\def\cK{\mathcal{K}}
\def\cL{\mathcal{L}}
\def\cM{\mathcal{M}}
\def\cN{\mathcal{N}}
\def\cO{\mathcal{O}}
\def\cP{\mathcal{P}}
\def\cQ{\mathcal{Q}}
\def\cR{\mathcal{R}}
\def\cS{\mathcal{S}}
\def\cT{\mathcal{T}}
\def\cU{\mathcal{U}}
\def\cV{\mathcal{V}}
\def\cW{\mathcal{W}}
\def\cX{\mathcal{X}}
\def\cY{\mathcal{Y}}
\def\cZ{\mathcal{Z}}

\def\sfA{\mathsf{A}}
\def\sfB{\mathsf{B}}
\def\sfC{\mathsf{C}}
\def\sfD{\mathsf{D}}
\def\sfE{\mathsf{E}}
\def\sfF{\mathsf{F}}
\def\sfG{\mathsf{G}}
\def\sfH{\mathsf{H}}
\def\sfI{\mathsf{I}}
\def\sfJ{\mathsf{J}}
\def\sfK{\mathsf{K}}
\def\sfL{\mathsf{L}}
\def\sfM{\mathsf{M}}
\def\sfN{\mathsf{N}}
\def\sfO{\mathsf{O}}
\def\sfP{\mathsf{P}}
\def\sfQ{\mathsf{Q}}
\def\sfR{\mathsf{R}}
\def\sfS{\mathsf{S}}
\def\sfT{\mathsf{T}}
\def\sfU{\mathsf{U}}
\def\sfV{\mathsf{V}}
\def\sfW{\mathsf{W}}
\def\sfX{\mathsf{X}}
\def\sfY{\mathsf{Y}}
\def\sfZ{\mathsf{Z}}

\def\bydef{:=}
\def\sfa{{\mathsf{a}}}
\def\sfb{{\mathsf{b}}}
\def\sfc{{\mathsf{c}}}
\def\sfd{{\mathsf{d}}}
\def\sfee{{\mathsf{e}}}
\def\sfff{{\mathsf{f}}}
\def\sfg{{\mathsf{g}}}
\def\sfh{{\mathsf{h}}}
\def\sfi{{\mathsf{i}}}
\def\sfj{{\mathsf{j}}}
\def\sfk{{\mathsf{k}}}
\def\sfl{{\mathsf{l}}}
\def\sfm{{\mathsf{m}}}
\def\sfn{{\mathsf{n}}}
\def\sfo{{\mathsf{o}}}
\def\sfp{{\mathsf{p}}}
\def\sfq{{\mathsf{q}}}
\def\sfr{{\mathsf{r}}}
\def\sfs{{\mathsf{s}}}
\def\sft{{\mathsf{t}}}
\def\sfu{{\mathsf{u}}}
\def\sfv{{\mathsf{v}}}
\def\sfw{{\mathsf{w}}}
\def\sfx{{\mathsf{x}}}
\def\sfy{{\mathsf{y}}}
\def\sfz{{\mathsf{z}}}
\def\sf0{{\mathsf{0}}}

\def\Nt{{N_t}}
\def\Nr{{N_r}}
\def\Ne{{N_e}}
\def\Ns{{N_s}}
\def\Es{{E_s}}
\def\No{{N_o}}
\def\sinc{\mathrm{sinc}}
\def\dmin{d^2_{\mathrm{min}}}
\def\vec{\mathrm{vec}~}
\def\kron{\otimes}
\def\Pe{{P_e}}
\newcommand{\expeq}{\stackrel{.}{=}}
\newcommand{\expg}{\stackrel{.}{\ge}}
\newcommand{\expl}{\stackrel{.}{\le}}
\def\SIR{{\mathsf{SIR}}}

\def\nn{\nonumber}

\begin{abstract}

\end{abstract}

\newpage
\section{Introduction}

Energy efficient transmission of packets in communication systems or processing of jobs in microprocessors (and similar applications) is a fundamental resource allocation problem. A job/packet can be processed/transmitted by a server `fast' but only at the cost of higher energy consumption. Typically, the energy cost is a convex function of the server speed, (e.g., a popular choice is $x^\alpha$ for $\alpha \ge 2$) and the general objective is two fold: maximize the profit from processing jobs while incurring minimum energy cost. The profit can either be the payoff accrued on processing a job or some function of the inverse of the processing time. 
This problem is known as {\it speed scaling} problem in literature, where the speed of the server is the tunable parameter which determines the profit and the energy consumption.

Speed scaling problem has been considered with both the infinite speed model as well as bounded speed model, where in the former case, the server is allowed to scale its speed anywhere in $[0, \infty)$, while in the latter case, it is bounded by a fixed constant. Infinite speed model allows processing of all jobs, and the main concern is to minimize energy consumption, while in the bounded speed model, both maximizing the profit and minimizing the energy consumption is a challenge. 

Another broad classification considered with the speed scaling problem is with respect to deadlines. In the case when jobs do not have deadlines, a typical objective is to minimize a linear combination of the sum of the flow time (completion time minus the arrival time) and the energy consumption for each job. When jobs also specify a deadline, under the bounded speed model, the objective is to process jobs that maximize the profit while minimizing the energy cost.  

An alternate speed scaling model (which we consider in this paper) is a discretized one, where time is divided into slots, and jobs with payoff $v$ arrive at the beginning of the slot with associated deadlines $d$. Each job takes one slot to be processed, and multiple jobs can be processed by the server in each slot with energy cost $g(k)$ for processing $k$ jobs in one slot. The payoff is accrued by the algorithm only if the job is processed by its deadline, and the objective is to maximize the sum of the profit (payoff minus energy cost) of the processed jobs.  

One limiting aspect of almost all literature on speed scaling with job deadlines is the need for the exact knowledge of deadlines. The derived results critically depend on exact deadline information, and are not robust to even a small uncertainty. In modern applications, the job deadlines could be time varying, could potentially depend on other jobs or their completion times, or may not be precisely known on the job arrival.
Towards addressing this critical aspect as well as to generalize the model, we consider a robust version of the discretized speed scaling problem, where a job on its arrival reveals its payoff $v$, however, the deadline is hidden to the online algorithm, which could potentially be chosen adversarially and known to offline optimal algorithm. 
This approach will lead to the derivation of robust online algorithms for speed scaling and provides a means to quantify the fundamental effect of deadline knowledge on speed scaling. We note that the robust approach is useful only if the problem itself does not become degenerate, in the sense that no online algorithm can achieve a reasonable competitive ratio. 

For the considered robust model, we propose a simple online algorithm, called the minimum-local competitive ratio (min-LCR) algorithm, that does not need any deadline information. Let $c_k = g(k)-g(k-1)$ be the effective energy cost for processing the $k^{th}$ job. Algorithm min-LCR indexes all the available jobs at each slot in the non-increasing order of their payoffs $v$, 
and computes the profit $p_\ell$ it will make if it processes $1\le \ell \le m$ jobs, where $m$ is such that $v_m-c_m >0$ and $v_{m+1}-c_{m+1} < 0$. It also presumes a worst case scenario for the deadlines (since it does not know the exact deadlines) where the $\ell$-chosen jobs for processing have deadlines infinity while the left-over jobs (other than the $\ell$ chosen jobs) have deadlines that expire in the current slot. Let $o_k$ be the profit that can be made by an offline algorithm under the knowledge of the worst choice of deadlines for the current set of available jobs, assuming no further jobs arrive. We define $\lcr_\ell = \frac{o_\ell}{p_\ell}$ for $1\le \ell \le m$, which has the interpretation of local competitive ratio (ratio of the optimal offline and the simple online algorithm under the worst case deadlines given no further jobs arrive), and the algorithm chooses to process $\ell$ jobs for which $\lcr_\ell$ is minimum. Note that always choosing $\ell=m$ will correspond to a natural greedy algorithm for this problem.


\subsection{Contributions}
We make the following contributions in this paper.
\begin{itemize}
\item We show that the proposed Algorithm min-LCR is an optimal online algorithm for any convex energy cost function $g(.)$. We do so without actually finding the optimal competitive ratio, which is fundamentally different than the typical proof strategy used in the analysis of online algorithms. To put our result's importance in perspective, to the best of our knowledge, no other online algorithm in the vast speed scaling literature is known to achieve the optimal competitive ratio except in rare cases where energy cost function is $g(k) = k^\alpha$ and when deadlines are exactly known. 

\item To ensure that the considered robust speed scaling problem is not degenerate in the sense that the adversarial deadlines choices that are unknown to the online algorithm make the competitive ratio of any online algorithm arbitrarily large, we provide concrete analysis for the most popular energy cost function of $g(k) = k^\alpha$. The min-LCR algorithm involves finding the optimal number of jobs to process among the available ones that minimizes the $\lcr$. We also consider a simplified version of the min-LCR algorithm, where exactly $k=\left \lfloor{\beta m}\right \rfloor$ or $k=\left \lceil{\beta m}\right \rceil$ jobs are processed in each slot depending on whichever one has lower LCR, where $\beta$ is the solution of the equation $x^\alpha + x^{\alpha-1} -1=0$. We also consider the natural greedy special case of min-LCR that always processes $m$ jobs, where $m$ has been defined in the min-LCR algorithm description as above.
\item We show that the optimal competitive ratio for energy cost function of $g(k) = k^\alpha$ with $\alpha=2$ is $\phi+1$ ($\phi = 1/\delta, \delta = \frac{\sqrt{5}-1}{2}$), and is achieved by the simplified min-LCR Algorithm. In comparison, when deadlines are known to the online algorithm, for $\alpha\ge 2$ the best known online algorithm has competitive ratio of $2$ and the best known lower bound is $\sqrt{2}$. 
\item For $\alpha\ge  2.5$, the competitive ratio of the simplified min-LCR Algorithm is at most $\phi+1$, and for any $\alpha \ge 2$, the competitive ratio of the greedy algorithm (min-LCR Algorithm with $\ell=m$ always) is at most $3$. We also derive a lower bound on the competitive ratio of $\sqrt{2}+1$ for all $\alpha \ge 2$. 
\item Thus, we show that for the energy cost function of $g(k) = k^\alpha$, lack of deadline knowledge reduces the optimal competitive ratio by a factor of at most $3/\sqrt{2}$. Thus, the loss in performance because of deadline uncertainty is limited, and precludes the possibility that the considered robust model is inherently weak, and the power of any online algorithm is seriously limited.

\end{itemize}

\subsection{Related Work}
Starting with \cite{yao1995scheduling}, there has been a long line of work on optimal speed scaling for servers with unbounded speed. For energy cost $g(k) = k^{\alpha}, \alpha >1$, a $2^{\alpha-1}\alpha^\alpha$-competitive algorithm was derived in \cite{yao1995scheduling}, whose competitive ratio was subsequently improved upon in \cite{bansal2007speed} to $2 \left(\frac{\alpha}{\alpha-1}\right)^{\alpha}\exp(1)^\alpha$, and eventually in \cite{bansal2009improved} to  $4^\alpha/ (2\sqrt{\exp(1) \alpha})$.

For the bounded speed case, where all jobs cannot be processed, \cite{chan2009optimizing} first derived an online algorithm that is $14$-competitive algorithm for throughput (number of processed jobs) and $\alpha^\alpha + \alpha^24^\alpha$-competitive for energy. Subsequently, the throughput competitiveness was improved to $4$ in \cite{bansal2008scheduling}, which is also the best possible \cite{baruah1991line}. 

In addition to speed scaling, an additional feature of {\it sleeping} was introduced in \cite{irani2007algorithms}, where all jobs can be processed with $(2^{2\alpha-2} \alpha^\alpha + 2^{\alpha-1}+2)$-competitiveness in terms of energy, which was improved upon in 
\cite{han2010deadline} to get $\alpha^\alpha +2$-competitiveness in terms of energy under the infinite speed model when all jobs can be processed, and $4$-competitive algorithm for throughput (number of processed jobs) and $(\alpha^\alpha + \alpha^24^\alpha+2)$-competitive for energy in the bounded server model. 

The no-deadline model where the objective function is to minimize the flow time plus the energy has been considered widely \cite{lam2008speed, wierman2009power, bansal2009speed}, with the most general result obtained in \cite{bansal2009speed} that gives a constant competitive algorithm for all energy cost functions. 
The speed scaling problem in the no-deadline model, where some information about jobs (either value/weight/density) is hidden is called the non-clairvoyant setting and has been addressed in literature starting with \cite{chan2011nonclairvoyant} and followed up in \cite{azar2015speed}. Speed scaling with multiple processors has also been considered in \cite{albers2014speed} and with non-clairvoyant setting in \cite{gupta2011nonclairvoyantly}, while modern applications for speed scaling are being addressed in \cite{Barcelo2016}, where energy is derived from solar cells for renewable energy harvesting. 

The discrete model studied in this paper was first considered in \cite{cote2010energy}, where jobs deadlines are known to the online algorithm, which proposed an online algorithm that is $2$-competitive, using the ideas from online request matching \cite{riedel2001online}. An associated lower bound (easy to construct) on the competitive ratio for this problem is $\sqrt{2}$.

As far as we know the speed scaling problem when jobs have hard deadlines that are not exactly known to the online algorithm has not been considered in literature. In load balancing literature, unknown job deadline case is referred to as scheduling for temporary tasks, where the duration for which a job lasts is unknown \cite{azar1993online}. In load-balancing, however, each job has to be scheduled as soon as it arrives, and the only decision variables are : which server to be assigned for each job and the dynamic server speed.


\section{Problem Definition}


We consider a discrete time system, where time is divided in discrete slots.
A sequence $\boldsymbol{\sigma} = J_1, \dots, J_n$ of jobs arrives causally, where job $J_i$ 
arrives at slot/time $a_i$ and must be finished by a deadline of $d_i$ slots starting from $a_i$ or is dropped. We assume $a_i \le a_{i+1}$ for $1\le i < n$. Each job takes one slot to be processed, and if processed before its deadline, job $J_i$ accumulates a payoff/value $v_i$. A job is {\it available} at slot $t$ if its absolute deadline is after slot $t-1$, and is {\it expired} otherwise.
The server can work at variable speed, and can process $k\ge 0$ jobs in any slot by incurring a cost of $g(k)$, where $g(.)$ is a convex function, e.g., $g(k) = k^{\alpha}, \alpha > 1$. 

In a significant departure from prior work on speed scaling, we consider the robust setting, where the online algorithm does not know the deadline for any job, which could potentially be chosen by an adversary. 
This allows us to model the deadline uncertainty, derive a robust online algorithm, and provide a means to quantify the effect of deadline knowledge on speed scaling. Thus, the information that any online algorithm has at slot $t$ is the set of jobs that have not expired by then, and their respective values. We, however, let the offline optimal algorithm to know the exact deadline and the respective payoff non-causally, to consider the worst case model. 

We consider only the deterministic algorithm setting, where on a sequence of jobs $\boldsymbol{\sigma}$, let the set of jobs processed at slot $j$ by an algorithm $\alg$ be $P_j$. Then the overall profit for $\alg$ is $$C_{\alg}(\boldsymbol{\sigma}) = \sum_{j=1}^{\mathsf{last}} (v_{P_j} - g(|P_j|)),$$
where $v_{P_j} = \sum_{i\in P_j} v_j$, and $\mathsf{last}$ is the last slot at which Algorithm $\alg$ processes any job. The objective is to minimize the competitive ratio $\sfr_{\alg} = \max_{\boldsymbol{\sigma}}\frac{C_{\off}(\boldsymbol{\sigma})}{C_{\alg}(\boldsymbol{\sigma})}$,
where $\off$ is the offline optimal algorithm that is allowed to know the sequence $\boldsymbol{\sigma}$ including the job deadlines non-causally. Let the optimal competitive ratio be 
\begin{equation}\label{eq:optcompratio}
\sfr^\star = \min_{\alg} \max_{\boldsymbol{\sigma}} \frac{C_{\off}(\boldsymbol{\sigma})}{C_{\alg}(\boldsymbol{\sigma})}, \end{equation}
and the optimal online algorithm achieving $\sfr^\star$ be $\opt$. To reiterate, $\min_{\alg}$ is over all deterministic online algorithms that do not have deadline information of jobs, and make decisions causally at each slot depending on the values of the available jobs only.
\begin{definition} The effective/incremental cost of processing the $k^{th}$ job in any slot is $c_k = g(k)-g(k-1)$. Since $g(k)$ is convex, $c_k > 0 \ \forall \ k$.
\end{definition}

\begin{definition} For two inputs $\boldsymbol{\sigma}_1$ and 
$\boldsymbol{\sigma}_2$, $\boldsymbol{\sigma}_1 \cup \boldsymbol{\sigma}_2$ corresponds to input where for each slot $t$, the set of jobs is the union of job arriving at slot $t$ in $\boldsymbol{\sigma}_1$ and 
$\boldsymbol{\sigma}_2$.
\end{definition}

\begin{lemma}\label{lem:subadd} For any two inputs $\boldsymbol{\sigma}_1$ and 
$\boldsymbol{\sigma}_2$, 
$C_{\off}(\boldsymbol{\sigma}_1 \cup \boldsymbol{\sigma}_2) \leq C_{\off}(\boldsymbol{\sigma}_1) + C_{\off}(\boldsymbol{\sigma}_2).$ 
\end{lemma}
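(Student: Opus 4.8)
The plan is to show that the optimal offline algorithm on the combined input $\boldsymbol{\sigma}_1 \cup \boldsymbol{\sigma}_2$ can be no better than the best of running $\off$ separately on each of $\boldsymbol{\sigma}_1$ and $\boldsymbol{\sigma}_2$ — in fact, better than the sum of the two — by a decomposition argument on the optimal offline schedule. Fix an optimal offline schedule $S$ for $\boldsymbol{\sigma}_1 \cup \boldsymbol{\sigma}_2$, and for each slot $j$ let $P_j$ be the set of jobs it processes in that slot. Partition $P_j = P_j^{(1)} \cup P_j^{(2)}$ according to whether a job originated in $\boldsymbol{\sigma}_1$ or $\boldsymbol{\sigma}_2$ (ties broken arbitrarily if a job appears in both, though as inputs are disjoint job sequences this does not arise). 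This induces two feasible schedules: schedule $S^{(1)}$ which processes $P_j^{(1)}$ in slot $j$, and $S^{(2)}$ which processes $P_j^{(2)}$ in slot $j$. Feasibility is immediate: each job is still processed within its own arrival/deadline window, and no slot processes more jobs than $S$ did in that slot.

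The key step is the cost accounting. The profit of $S$ is $\sum_j \big(v_{P_j} - g(|P_j|)\big)$, and since $v_{P_j} = v_{P_j^{(1)}} + v_{P_j^{(2)}}$, the payoff splits exactly. For the energy term I need $g(|P_j^{(1)}|) + g(|P_j^{(2)}|) \le g(|P_j^{(1)}| + |P_j^{(2)}|) = g(|P_j|)$; this requires $g$ to be superadditive on the nonnegative integers, which follows from convexity together with $g(0) \le 0$ — a mild normalization (indeed $g(0) = 0$ is natural: processing zero jobs costs nothing), and for $g(k) = k^\alpha$ with $\alpha > 1$ superadditivity is standard. Hence the profit of $S$ is at most the profit of $S^{(1)}$ plus the profit of $S^{(2)}$, slot by slot:
\[
C_{\off}(\boldsymbol{\sigma}_1 \cup \boldsymbol{\sigma}_2) = \sum_j \big(v_{P_j} - g(|P_j|)\big) \le \sum_j \big(v_{P_j^{(1)}} - g(|P_j^{(1)}|)\big) + \sum_j \big(v_{P_j^{(2)}} - g(|P_j^{(2)}|)\big).
\]
Since $S^{(1)}$ is a feasible (not necessarily optimal) schedule for $\boldsymbol{\sigma}_1$, its profit is at most $C_{\off}(\boldsymbol{\sigma}_1)$, and likewise for $S^{(2)}$; combining gives the claim.

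The main obstacle is the superadditivity of $g$: the statement as written asserts the inequality for an arbitrary convex $g$, but convexity alone does not give superadditivity without a normalization at $0$ (e.g. $g(k) = k + 1$ is convex but not superadditive). I expect the resolution is simply that the paper implicitly takes $g(0) = 0$ (so that idle slots are free, which is also what makes $\mathsf{last}$ well-defined and the profit sums finite); under that convention convexity yields $g(a+b) = g(a+b) \ge g(a) + \frac{a}{a+b}\big(g(a+b)-g(0)\big)$-type estimates that deliver $g(a) + g(b) \le g(a+b)$ for integers $a,b \ge 0$. I would state this normalization explicitly at the start of the proof (or note it has been assumed earlier) and then the rest is the routine slot-wise decomposition above. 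A secondary, purely bookkeeping point is handling the index $\mathsf{last}$: the three schedules may stop at different slots, but extending any schedule with empty slots costs $g(0) = 0$, so all sums can be taken over a common finite range without changing any profit.
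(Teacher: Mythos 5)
Your proposal is correct and follows essentially the same route as the paper: decompose the optimal offline schedule for $\boldsymbol{\sigma}_1 \cup \boldsymbol{\sigma}_2$ slot by slot into the two induced schedules and bound each induced schedule's profit by $C_{\off}(\boldsymbol{\sigma}_i)$. The paper's one-line proof silently assumes the energy cost splits favorably between the two parts; your explicit appeal to superadditivity of $g$ (convexity together with the normalization $g(0)=0$) is precisely the point the paper glosses over, and your counterexample $g(k)=k+1$ shows that normalization is genuinely needed.
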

\begin{proof}
Let the profit (payoff minus the energy cost) accrued in $C_{\off}(\boldsymbol{\sigma}_1 \cup \boldsymbol{\sigma}_2)$ corresponding to the processing of jobs belonging to $\boldsymbol{\sigma}_i$ be $p_i$. Then $p_i \le C_{\off}(\boldsymbol{\sigma}_i)$ and the result follows since $C_{\off}(\boldsymbol{\sigma}_1 \cup \boldsymbol{\sigma}_2) = p_1+p_2$.\end{proof}

\section{min-LCR Algorithm}

\begin{algorithm}[ht]
\caption{min-LCR Algorithm}
\SetKwInOut{Input}{Input}
\Input{Sequence $\boldsymbol{\sigma} = J_1, \dots, J_n$}
At slot $\tau$, consider the union of all the non-processed jobs by the algorithm so far that are available and the newly arriving jobs in slot $\tau$, and call it $E(\tau)$ \\
Arrange the jobs in $E(\tau)$ in non-increasing order of their payoffs/value $v$ \\
Let $E(\tau)^i = \{\text{first} \ \ i \ \text{jobs in} \ E(\tau)\}$ and $E^{i-}(\tau) = E(\tau) \backslash E(\tau)^i$\\
Let $v^{(i)}$ be the value of job of $E(\tau)$ with the $i^{th}$ highest value and 
$V^{(i)}$ be the sum of the values of the first $i$ jobs with highest values\\
Let $m =  \max_{\{v^{(j)} - [g(j)-g(j-1)] > 0\}} j $ \\
\For{$i=1:m$}{
	Let 
$M_{i,\tau} = V^{(i)} - i g(1),  P_{i,\tau} = V^{(i)} - g(i), C_{\greedy}(i,\tau) = \max_{j \in E^{i-}(\tau) } \{ {V^{(j)} - g(j)} \} $\\	
$\lcr_{i}(\tau) = \dfrac{M_{i,\tau} + C_{\greedy}(i,\tau)}{P_{i,\tau}}$\\
	
		}	
		Let $i^\star(\tau) = \arg \min_{i=1, \dots, m} \lcr_{i}(\tau)$\\
		Process first $i^\star(\tau)$ jobs of $E(\tau)$ at slot $\tau$, call the processed set of jobs $L^p_{\boldsymbol{\sigma}}(\tau)$.
\end{algorithm}
\begin{remark}\label{rem:greedysp} Since $c_k > 0$ and increasing in $k$, $m$ (the largest number of jobs that can be processed, where each job has a positive profit) is well defined in min-LCR algorithm.
\end{remark}
\begin{remark} A natural $\greedy$ algorithm is a special case of the min-LCR Algorithm when $i^\star(\tau) = m$ at all slots $\tau$.
\end{remark}
The basic idea behind the min-LCR algorithm has been described in the introduction. To be specific,  the online algorithm's profit is $P_{i,\tau}$ if it processes $i$ jobs in slot $\tau$ without the knowledge of the deadlines of the available jobs. The online algorithm presumes that possibly no further jobs are going to arrive, and the $i$-chosen jobs for processing by the algorithm have deadlines as infinity, while the
left-over jobs (other than the $i$ chosen jobs) have deadlines that expire in the current slot.
The profit of the $\off$ under this presumption is $o_{i,\tau} = M_{i,\tau} + C_{\greedy}(i,\tau)$, where $M_{i,\tau}$ is the profit $\off$ can accrue by processing $i$ highest valued jobs one in each slot starting from slot $\tau+1$ if their deadlines are infinity, while $C_{\greedy}(i,\tau)$ is the largest profit possible by processing the set of jobs other than the $i$ highest valued jobs in slot $\tau$ itself. The algorithm chooses $i$ that minimizes the ratio of $\frac{o_{i,\tau}}{P_{i,\tau}}$ which essentially is the local competitive ratio at slot $\tau$.

\begin{theorem}\label{thm:lcr} Algorithm min-LCR is an optimal online algorithm that achieves the optimal competitive ratio \eqref{eq:optcompratio}.
\end{theorem}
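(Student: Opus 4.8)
The plan is to determine the optimal competitive ratio only implicitly, by sandwiching it between a single quantity $\Lambda$ and itself (throughout, $\sfr_{\mathsf{ml}}$ and $C_{\mathsf{ml}}(\cdot)$ denote the competitive ratio and the profit of Algorithm min-LCR). For a set $E$ of available jobs write $m$, $M_{i}$, $P_{i}$, $C_{\greedy}(i,\cdot)$, $o_{i}=M_{i}+C_{\greedy}(i,\cdot)$ and $\lcr_{i}=o_{i}/P_{i}$ for the quantities Algorithm min-LCR computes from $E$, and set $\Lambda:=\sup\,\min_{1\le i\le m}\lcr_{i}$, the supremum being over all finite job sets $E$ (equivalently, over all $E(\tau)$ that can arise in some run). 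The proof has three ingredients: (a) every deterministic online algorithm that ignores deadlines has competitive ratio at least $\Lambda$, so $\sfr^{\star}\ge\Lambda$; (b) Algorithm min-LCR has competitive ratio at most $\Lambda$; and (c) min-LCR belongs to the class of algorithms over which the minimum in \eqref{eq:optcompratio} is taken, so $\sfr^{\star}\le\sfr_{\mathsf{ml}}$. Chaining, $\Lambda\le\sfr^{\star}\le\sfr_{\mathsf{ml}}\le\Lambda$, which forces $\sfr_{\mathsf{ml}}=\sfr^{\star}$ and proves the theorem; note $\Lambda$ itself is never evaluated, which is the sense in which optimality is established without computing the optimal ratio.

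For (a), fix any such algorithm $\alg$ and any job set $E$ with $m\ge1$. Present $E$ at slot $1$ with no further arrivals and let $\ell$ be the number of jobs $\alg$ processes in slot $1$; if $\ell=0$ let the adversary expire every job after slot $1$ and the ratio is unbounded. Put $k=\min(\ell,m)$; the adversary assigns an effectively infinite deadline to the $k$ highest-valued jobs and expires all remaining jobs after slot $1$. Then $\off$ gets at least $o_{k}$ --- spread the $k$ surviving jobs one per slot over slots $2,\dots,k+1$, and batch-process a profit-maximising subset of the expiring jobs in slot $1$ --- while $\alg$, with nothing available after slot $1$, collects at most $P_{k}$ (it gains nothing by processing more than $m$ jobs, since $c_i$ is increasing). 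Hence $\sfr_{\alg}\ge\lcr_{k}\ge\min_{1\le i\le m}\lcr_{i}$, and taking the supremum over $E$ gives $\sfr_{\alg}\ge\Lambda$.

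Ingredient (b) is the crux. Run min-LCR on an arbitrary input $\boldsymbol{\sigma}$; let $\mathcal{T}$ be the set of slots in which it processes a job and $A_{\tau}=L^{p}_{\boldsymbol{\sigma}}(\tau)$ the $i^{\star}(\tau)$ jobs it processes at $\tau\in\mathcal{T}$. Partition the jobs of $\boldsymbol{\sigma}$ into those min-LCR processes, forming the sub-input $\boldsymbol{\sigma}_{\mathcal{A}}=\bigcup_{\tau\in\mathcal{T}}A_{\tau}$, and the rest $\boldsymbol{\sigma}_{\mathcal{B}}$; iterating Lemma~\ref{lem:subadd}, $C_{\off}(\boldsymbol{\sigma})\le\sum_{\tau\in\mathcal{T}}C_{\off}(A_{\tau})+C_{\off}(\boldsymbol{\sigma}_{\mathcal{B}})$. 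First, every job of $A_{\tau}$ has value exceeding $c_{i^{\star}(\tau)}\ge c_{1}=g(1)$ (as $i^{\star}(\tau)\le m$ and $v^{(i)}-c_i$ is nonincreasing), and by convexity of $g$ (with $g(0)=0$) processing jobs in separate slots dominates batching while dropping a strictly profitable job only hurts, so $C_{\off}(A_{\tau})\le M_{i^{\star}(\tau),\tau}$. Second --- the key structural point --- any job that min-LCR never processes is, at every slot $\tau$ at which it is still available in the run, ranked strictly below the threshold $i^{\star}(\tau)$, i.e.\ it lies in the leftover set $E^{i-}(\tau)$ with $i=i^{\star}(\tau)$; hence in $\off$'s optimal schedule for $\boldsymbol{\sigma}_{\mathcal{B}}$ the set processed in a slot $\tau\in\mathcal{T}$ is contained in that leftover set (slots outside $\mathcal{T}$ have $m=0$ and yield nothing positive), so its profit there is at most $C_{\greedy}(i^{\star}(\tau),\tau)$, giving $C_{\off}(\boldsymbol{\sigma}_{\mathcal{B}})\le\sum_{\tau\in\mathcal{T}}C_{\greedy}(i^{\star}(\tau),\tau)$. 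Adding, $C_{\off}(\boldsymbol{\sigma})\le\sum_{\tau\in\mathcal{T}}o_{i^{\star}(\tau),\tau}=\sum_{\tau\in\mathcal{T}}\lcr_{i^{\star}(\tau)}(\tau)\,P_{i^{\star}(\tau),\tau}\le\Lambda\sum_{\tau\in\mathcal{T}}P_{i^{\star}(\tau),\tau}=\Lambda\,C_{\mathsf{ml}}(\boldsymbol{\sigma})$, where we used $\lcr_{i^{\star}(\tau)}(\tau)=\min_{1\le i\le m}\lcr_{i}(\tau)\le\Lambda$ by the definitions of $i^{\star}(\tau)$ and of $\Lambda$. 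Thus $\sfr_{\mathsf{ml}}\le\Lambda$.

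I expect the two inequalities inside (b) to be where the work lies: proving $C_{\off}(A_{\tau})\le M_{i^{\star}(\tau),\tau}$ cleanly from convexity of $g$ (no gain from batching, no gain from discarding a strictly profitable job), and --- more delicately --- verifying that min-LCR's greedy-by-value selection really does keep every job it skips inside the leftover set $E^{i-}(\tau)$ for that job's entire lifetime, which is exactly what makes the $C_{\greedy}$ budgets suffice; one must also handle ties in job values and slots with $m=0$ (so that $\mathcal{T}$ need not contain all slots where $\off$ is active). These are technical rather than deep. The conceptual content is the sandwich: the same local ratios $\lcr_{i^{\star}(\tau)}(\tau)$ that upper-bound $C_{\off}/C_{\mathsf{ml}}$ on every input also lower-bound the competitive ratio of every online algorithm, so min-LCR cannot be beaten even though the common value $\Lambda=\sfr^{\star}$ is left undetermined.
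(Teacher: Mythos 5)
Your argument is essentially the paper's own proof: your ingredient (b) is Lemma \ref{lem:compratiolcrub} (the same subadditive split of $\boldsymbol{\sigma}$ into the processed batches, each bounded by $M_{i^\star,\tau}$, and the never-processed jobs, bounded slot-by-slot by $C_{\greedy}(i^\star(\tau),\tau)$ via the containment in the leftover set), while your ingredient (a) is Lemma \ref{lem:lcrcomparison} stated directly rather than by contradiction, with your $\Lambda=\sup_E\min_i\lcr_i$ playing the role of $\max_\tau\lcr_{i^\star(\tau)}(\tau)$. One slip in (a): the adversary must assign the infinite deadlines to the $k$ jobs the algorithm \emph{actually processes} (as the paper does), not to the $k$ highest-valued jobs --- otherwise an algorithm that processes a non-top-$k$ set keeps valuable jobs alive past slot $1$ and is not confined to $P_k$; with that fix the bound $C_{\off}/C_{\alg}\ge\lcr_k$ still follows, since for processed value $W\le V^{(k)}$ the ratio $\bigl(W-kg(1)+C_{\greedy}(k)\bigr)/\bigl(W-g(k)\bigr)$ is nonincreasing in $W$.
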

To prove Theorem \ref{thm:lcr}, we show that the competitive ratio of Algorithm min-LCR on input $\boldsymbol{\sigma}$ is at most $\max_{\tau} \lcr_{i^\star(\tau)}(\tau)$ in Lemma \ref{lem:compratiolcrub},  and $\max_{\tau} \lcr_{i^\star(\tau)}(\tau) \le \sfr^{\star}$ in Lemma \ref{lem:lcrcomparison}.

\begin{lemma}\label{lem:compratiolcrub} The competitive ratio of Algorithm min-LCR on input $\boldsymbol{\sigma}$ is at most $\max_{\tau} \lcr_{i^\star(\tau)}(\tau)$.
\end{lemma}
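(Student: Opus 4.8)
The plan is to bound $C_{\off}(\boldsymbol{\sigma})$ slotwise against the quantities $o_{i^\star(\tau)}(\tau):=M_{i^\star(\tau),\tau}+C_{\greedy}(i^\star(\tau),\tau)$ that the algorithm optimizes. Set $R:=\max_{\tau}\lcr_{i^\star(\tau)}(\tau)$. Since $\lcr_i(\tau)=o_{i,\tau}/P_{i,\tau}$ and, for every $i\le m$, $P_{i,\tau}=V^{(i)}-g(i)=\sum_{j\le i}(v^{(j)}-c_j)>0$ by the definition of $m$ (using the natural normalization $g(0)=0$), we get $o_{i^\star(\tau)}(\tau)\le R\,P_{i^\star(\tau),\tau}$ at each slot. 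Also, at slot $\tau$ Algorithm min-LCR processes exactly the top $i^\star(\tau)$ jobs of $E(\tau)$, earning $V^{(i^\star(\tau))}-g(i^\star(\tau))=P_{i^\star(\tau),\tau}$, so its total profit is $C_{\mathrm{LCR}}(\boldsymbol{\sigma})=\sum_{\tau}P_{i^\star(\tau),\tau}$. Hence it suffices to prove the master inequality $C_{\off}(\boldsymbol{\sigma})\le\sum_{\tau}o_{i^\star(\tau)}(\tau)$; summing the slotwise bounds then gives $C_{\off}(\boldsymbol{\sigma})\le R\sum_{\tau}P_{i^\star(\tau),\tau}=R\,C_{\mathrm{LCR}}(\boldsymbol{\sigma})$, as required. (Slots with $m=0$, where $i^\star(\tau)=0$, contribute $0$ to $C_{\mathrm{LCR}}$ and $0$ to the right-hand side of the master inequality, and are excluded from the maximum defining $R$.)

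For the master inequality, fix an optimal offline schedule on $\boldsymbol{\sigma}$ and distribute the jobs it processes over the slots by the following rule: a job that min-LCR also processes is sent to the slot $\tau$ at which min-LCR processes it — at that slot it is among the top $i^\star(\tau)$ jobs of $E(\tau)$; a job that min-LCR never processes is sent to the slot $\tau$ at which the offline schedule processes it — at that slot it is still unprocessed by min-LCR and available, hence lies in $E(\tau)$, and it is not among the top $i^\star(\tau)$ jobs (else min-LCR would process it), hence lies in $E^{i^\star(\tau)-}(\tau)$. This yields, for each $\tau$, a group $G_{\tau}=H_{\tau}\sqcup S_{\tau}$: $H_{\tau}$ consists of min-LCR-processed jobs, a subset of the top $i^\star(\tau)$ jobs of $E(\tau)$, each of value exceeding $c_1=g(1)$; and $S_{\tau}\subseteq E^{i^\star(\tau)-}(\tau)$, all of whose jobs the offline schedule processes in slot $\tau$ itself. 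Charging slot $\tau$ the quantity $q_{\tau}:=\sum_{t}\bigl(V_{\tau,t}-g(n_{\tau,t})\bigr)$, where $V_{\tau,t}$ and $n_{\tau,t}$ are the total value and number of $G_{\tau}$-jobs the offline schedule processes in slot $t$, superadditivity of $g$ (convexity with $g(0)=0$ gives $g(\sum_j n_j)\ge\sum_j g(n_j)$) yields $C_{\off}(\boldsymbol{\sigma})\le\sum_{\tau}q_{\tau}$; this is the same cost-attribution device as in Lemma~\ref{lem:subadd}, and $q_\tau$ is, informally, the offline profit on the worst-case virtual instance presumed at slot $\tau$ (top $i^\star(\tau)$ jobs with infinite deadlines, leftover jobs of $E(\tau)$ expiring in slot $\tau$), restricted to $G_\tau$.

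It remains to show $q_{\tau}\le o_{i^\star(\tau)}(\tau)$. Among offline's batches only the slot-$\tau$ batch can contain jobs of $S_{\tau}$; write $h_0$ for the number of $H_{\tau}$-jobs in it, so that batch has size $h_0+|S_{\tau}|$ and the remaining batches contain only $H_{\tau}$-jobs, $|H_\tau|-h_0$ of them in total. Using $g(k)\ge k\,g(1)$ on the remaining batches and $g(h_0+|S_{\tau}|)\ge g(h_0)+g(|S_{\tau}|)\ge h_0 g(1)+g(|S_{\tau}|)$ on the slot-$\tau$ batch, the total batch cost is at least $|H_{\tau}|\,g(1)+g(|S_{\tau}|)$, whence $q_{\tau}\le\bigl(V(H_{\tau})-|H_{\tau}|\,g(1)\bigr)+\bigl(V(S_{\tau})-g(|S_{\tau}|)\bigr)$. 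The first term is at most $M_{i^\star(\tau),\tau}=V^{(i^\star(\tau))}-i^\star(\tau)\,g(1)$, because the top-$i^\star(\tau)$ jobs of $E(\tau)$ absent from $H_{\tau}$ each contribute value more than $g(1)$; the second term is at most $C_{\greedy}(i^\star(\tau),\tau)$ by its definition as the maximum over the number $j$ of leftover jobs processed together of $V^{(j)}-g(j)$ on $E^{i^\star(\tau)-}(\tau)$. Adding gives $q_{\tau}\le o_{i^\star(\tau)}(\tau)$, which closes the master inequality and hence the lemma.

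The main obstacle is precisely this last cost comparison. The quantities $M$ and $C_{\greedy}$ encode a rigid offline behaviour — high-value jobs processed one per slot, leftover jobs all batched into a single slot — whereas the true offline optimum may batch jobs arbitrarily, and, worse, min-LCR may process a job strictly later than the offline optimum does, so a naive slotwise charge need not be feasible for the presumed virtual instance. The routing rule is designed to defuse both difficulties at once: jobs never processed by min-LCR are routed to the offline slot, which forces the ``leftover'' part of each group to be a single offline batch matching the structure inside $C_{\greedy}$; jobs processed by min-LCR are routed to the online slot, where they are provably among the top $i^\star(\tau)$ with positive solo profit, so the bound $M_{i^\star(\tau),\tau}$ applies no matter where offline placed them; and superadditivity of $g$ absorbs the remaining slack. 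Verifying that these choices are mutually consistent — every offline-processed job lands in exactly one group with the claimed membership in $E(\tau)$ or $E^{i^\star(\tau)-}(\tau)$ — is the bookkeeping the proof must carry out carefully.
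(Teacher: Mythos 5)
Your proof is correct and follows essentially the same route as the paper's: the paper decomposes the input as $\boldsymbol{\sigma}=\bigcup_\tau L^p(\tau)\cup(\boldsymbol{\sigma}\backslash L^p)$ and applies the subadditivity of $C_{\off}$ (Lemma~\ref{lem:subadd}), then bounds $C_{\off}(L^p(\tau))\le M_{i^\star,\tau}$ via the one-job-per-slot observation and the slot-$\tau$ profit on $\boldsymbol{\sigma}\backslash L^p$ by $C_{\greedy}(i^\star,\tau)$ via the inclusion $(\boldsymbol{\sigma}\backslash L^p)_\tau\subseteq E(\tau)\backslash L^p(\tau)$ — exactly your $H_\tau$/$S_\tau$ split. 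Your explicit routing of offline-processed jobs and the inline superadditivity of $g$ is just an unrolled version of that subadditivity argument, so the two proofs coincide in substance.
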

\begin{proof}
Let $L^p_{\boldsymbol{\sigma}}(\tau)$ be the jobs processed by Algorithm min-LCR on input $\boldsymbol{\sigma}$ at slot $\tau$, and  $\bigcup\limits_{\tau=1}^{\mathsf{last}} L^p(\tau) = L^p$, where $\mathsf{last}$ is the last slot at which Algorithm min-LCR processes any job. 
\begin{align}\nn
    C_{\off}(\boldsymbol{\sigma}) &= C_{\off}(L^p \cup  \boldsymbol{\sigma}\backslash L^p),\\\nn
    &\stackrel{(a)}\leq C_{\off}(L^p) + C_{\off}(\boldsymbol{\sigma} \backslash L^p),\\\nn
    &= C_{\off}\bigg(\bigcup\limits_{\tau=1}^{\mathsf{last}} L^p(\tau)  \bigg) + C_{\off}(\boldsymbol{\sigma}\backslash L^p),\\ \label{eq:int1}
    &\stackrel{(b)}\leq \sum_{\tau=1}^{\mathsf{last}} C_{\off}(L^p(\tau)) \;\;+\;\; C_{\off}(\boldsymbol{\sigma}\backslash L^p) = \sum_{\tau=1}^{\mathsf{last}} C_{\off}(L^p(\tau)) \;\;+\;\; \sum_{\tau=1}^{\mathsf{last}} C_{\off}(\boldsymbol{\sigma}\backslash L^p)_{\tau},
\end{align}
where $(a)$ and $(b)$ follow from sub-additivity Lemma (Lemma \ref{lem:subadd}), and where $C_{\off}(\boldsymbol{\sigma}\backslash L^p)_{\tau}$ is the profit obtained by the $\off$ algorithm in slot $\tau$ when the input is only $\boldsymbol{\sigma}\backslash L^p$. 

The largest profit can be made from jobs in set $L^p(\tau)$ if each of them are processed alone in distinct slots with 
energy cost of $g(1)$, hence
\begin{equation}\label{eq:dummy39} 
C_{\off}(L^p(\tau)) \leq \sum_{i \in L^p(\tau)} (v_i -g(1)) \end{equation}
Since $L^p(\tau)$ is the set consisting of first $i^\star(\tau)$ jobs of $E(\tau)$, and $|L^p(\tau)|  = i^\star(\tau)$, we get  $\sum_{i \in L^p(\tau)} v_i = V^{i^\star(\tau)}$ and $\sum_{i \in L^p(\tau)} g(1) = i^\star(\tau)g(1)$. Hence from \eqref{eq:dummy39}, we get 
\begin{equation}\label{eq:dummy3} 
C_{\off}(L^p(\tau)) \leq  V^{i^\star(\tau)} - i^\star(\tau)g(1) = M_{i^\star,\tau},
\end{equation}
where the last inequality follows from the definition of $M_{i^\star,\tau}$ from Algorithm min-LCR.

The set $\boldsymbol{\sigma} \backslash L^p$ consists of all jobs of $\boldsymbol{\sigma}$ that are not processed by the min-LCR algorithm at any slot during its operation. The 
$( \boldsymbol{\sigma} \backslash L^p )_{\tau}$ is the set of elements of $\boldsymbol{\sigma} \backslash L^p$ which arrived at or before slot $\tau$ and whose deadline is after slot $\tau-1$.
In comparison, the set $E(\tau) \backslash L^p {(\tau)}$ is the union of jobs available at slot $\tau$ that are never processed by the min-LCR algorithm and the available jobs at slot $\tau$ that are processed by the min-LCR algorithm in some slot after slot $\tau$.


Thus, 
\begin{equation}\label{eq:dummy1} (\boldsymbol{\sigma}\backslash L^p)_{\tau} \subseteq E(\tau) \backslash L^p(\tau), 
\end{equation}
By definition, 
$C_{\off}(\boldsymbol{\sigma}\backslash L^p)_{\tau} \le \max_{A \in \offalg} C_A(\boldsymbol{\sigma}\backslash L^p)_{\tau}$, where the maximization is over all the offline algorithms. Hence, using \eqref{eq:dummy1}, we get 
\begin{equation}\label{eq:dummy2}
C_{\off}(\boldsymbol{\sigma}\backslash L^p)_{\tau} \le \max_{A \in \offalg} C_A(E(\tau) \backslash L^p(\tau))_{\tau} = C_{\greedy}(i^\star, \tau),
\end{equation} where the last equality is derived as follows.
Let $A^\star$ be the maximizer  of $\max_{A \in \offalg} C_A(E(\tau) \backslash L^p(\tau))_{\tau}$ and say it processes some $k$ jobs with values $v_1, \dots, v_k$ belonging to set $E(\tau) \backslash L^p(\tau)$ in slot $\tau$, then
\begin{align*}
    \max_{A \in \offalg} C_A( E(\tau) \backslash L^p(\tau) )_{\tau}    &= \sum_{i=1}^{k} v_i  \;- g(k),\\
    &\stackrel{(a)}{\leq} \sum_{i=1}^{k} v^{(i)}  \;- g(k) \; = V^{(k)} - g(k),\\
    &\leq \max_{k} \{ V^{(k)} - g(k) \} \stackrel{(b)}{=} C_{\text{Greedy}} (i^\star , \tau),
\end{align*}
where $(a)$ follows since sum of the values of any set of $k$ jobs is less than those of the $k$ highest valued jobs (where $v^{(i)}$ is the value of the $i^{th}$ highest valued job), and $(b)$ follows from the definition of $C_{\text{Greedy}}(i^\star,\tau)$ in Algorithm min-LCR since the jobs chosen by $A^\star$ belong to the set $E(\tau) \backslash L^p(\tau)$.
\end{proof}

Hence, using \eqref{eq:dummy3} and \eqref{eq:dummy2}, we have from \eqref{eq:int1}, 
\begin{align}\nn
    C_{\off}(\boldsymbol{\sigma}) &\le \sum_{\tau=1}^{\mathsf{last}} M_{i^\star,\tau} + \sum_{\tau=1}^{\mathsf{last}}C_{\greedy}(i^\star, \tau).
    \end{align}
    
From the definition of the Algorithm min-LCR, the profit made by it at slot $\tau$ is $P_{i^\star,\tau} = V^{(i^\star)} - g(i^\star)$ by processing $i^\star$ jobs at slot $\tau$. Thus, the competitive ratio of min-LCR Algorithm is 
at most 
\begin{align}\nn
    \sfr_{LCR}(\boldsymbol{\sigma}) = \frac{ C_{\off}(\boldsymbol{\sigma})}{C_{\text{LCR}}(\boldsymbol{\sigma})}& \le \frac{\sum_{\tau=1}^{\mathsf{last}} M_{i^\star,\tau} +C_{\greedy}(i^\star, \tau)}{\sum_{\tau=1}^{\mathsf{last}}P_{i^\star,\tau}}, \\ \label{eq:dummy51}
    &\le \max_\tau \frac{ M_{i^\star,\tau} +C_{\greedy}(i^\star, \tau)}{P_{i^\star,\tau}} = \max_\tau \lcr_{i^\star(\tau)}(\tau),
    \end{align}
    where $\mathsf{last}$ is the last slot at which min-LCR algorithm processes any jobs for the both the numerator and the denominator.

Next, to complete the proof of Theorem \ref{thm:lcr}, we show that the  optimal competitive ratio $\sfr^{\star}$ is at least $\max_\tau \lcr_{i^\star(\tau)}(\tau)$ for any input $\boldsymbol{\sigma}$ and any slot $\tau$.

\begin{lemma}\label{lem:lcrcomparison} For any input $\boldsymbol{\sigma}$,  $\max_{\tau} \lcr_{i^\star(\tau)}(\tau) \le \sfr^{\star}$.
\end{lemma}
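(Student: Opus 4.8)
The plan is to prove the contrapositive‑flavoured statement that no deterministic online algorithm can be $c$‑competitive for any $c<\max_{\tau}\lcr_{i^\star(\tau)}(\tau)$; since $\sfr^\star=\min_{\alg}\max_{\boldsymbol{\sigma}}C_{\off}/C_{\alg}$ this gives the lemma. Fix $\boldsymbol{\sigma}$ and a slot $\tau$ attaining the maximum and write $w^{(1)}\ge\cdots\ge w^{(n)}$ for the values of the jobs in $E(\tau)$; then $\lcr_{i^\star(\tau)}(\tau)=\min_{1\le i\le m}\frac{M_{i,\tau}+C_{\greedy}(i,\tau)}{P_{i,\tau}}$ depends only on these values and on $g$. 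So it is enough to exhibit, for an arbitrary deterministic online algorithm $\alg$, \emph{some} instance built from these values on which $C_{\off}/C_{\alg}\ge\lcr_{i^\star(\tau)}(\tau)$.

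I would take the instance that releases all $n$ jobs, with values $w^{(1)},\dots,w^{(n)}$, in the first slot. Because deadlines are hidden, the set $S$ that $\alg$ processes in slot $1$ is the same on every instance with this first slot. If $\alg$ processes nothing, or a set with $v(S)\le g(|S|)$ (non‑positive slot‑$1$ profit), then setting all deadlines equal to $1$ gives $C_{\alg}\le 0$ while $C_{\off}\ge P_{m,\tau}>0$, and we are done. Otherwise comes the key move: assign the jobs of $S$ deadline $\infty$ and \emph{every other job} deadline $1$. The only jobs ever available after slot $1$ are then those of $S$, all already processed by $\alg$, so $\alg$ is frozen at $C_{\alg}=v(S)-g(|S|)$; meanwhile $\off$ can take the best sub‑batch of the jobs outside $S$ in slot $1$ and spread the jobs of $S$ one per slot afterwards, giving $C_{\off}\ge \sum_{j\in S}(w_j-c_1)^{+}+\max_{\ell}\big(\text{(sum of the $\ell$ largest values outside $S$)}-g(\ell)\big)=:N(S)$. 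Hence $C_{\off}/C_{\alg}\ge R(S):=N(S)/(v(S)-g(|S|))$.

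It remains to show $R(S)\ge\lcr_{i^\star(\tau)}(\tau)$ for every nonempty $S$ of positive slot‑$1$ profit, which I would do in two reductions. First, a job of value $\le c_1$ never improves any batch (its marginal contribution to a batch is its value minus some $c_\ell\le$ its value minus $c_1\le 0$), so deleting it from $S$ leaves $N(S)$ unchanged and strictly increases $v(S)-g(|S|)$; hence $R$ only decreases, and it suffices to treat $S$ with all values $>c_1$, so $|S|=k\le i^{\max}$ where $i^{\max}$ is the number of values exceeding $c_1$. Second, among such $S$ of a fixed size $k$, $R(S)$ is minimised when $S$ is the set of the $k$ largest values: swapping a job of $S$ for a higher‑valued one strictly increases $v(S)$ and weakly decreases the complement's best‑batch term, and a short computation using $g(k)\ge kc_1$ shows $R$ goes down. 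At that extreme $R$ equals exactly $\frac{M_{k,\tau}+C_{\greedy}(k,\tau)}{P_{k,\tau}}=\lcr_{k}(\tau)$. Finally $\lcr_{k}(\tau)\ge\lcr_{i^\star(\tau)}(\tau)$: for $k\le m$ this is the definition of the minimum, and for $m<k\le i^{\max}$ one checks, using subadditivity of the best‑batch function (a consequence of $g$'s convexity) and the defining property of $m$, that $M_{k,\tau}+C_{\greedy}(k,\tau)\ge M_{m,\tau}+C_{\greedy}(m,\tau)$ while $P_{k,\tau}<P_{m,\tau}$, so $\lcr_k(\tau)>\lcr_m(\tau)\ge\lcr_{i^\star(\tau)}(\tau)$.

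The step I expect to be the main obstacle is getting the adversarial move right: the obvious candidates—expiring every job, or giving the \emph{top} $i$ jobs deadline $\infty$—both fail, because $\alg$ can either keep enough value in slot $1$ or later ``recover'' the high‑value jobs it deferred; expiring precisely the complement of $\alg$'s slot‑$1$ set is what pins $\alg$ to its slot‑$1$ profit with no recourse. Granting that, the calculational heart is the monotonicity claim $R(S)\ge R(\{\text{$|S|$ largest values}\})=\lcr_{|S|}(\tau)$, which is where convexity of $g$ is used. Together with Lemma~\ref{lem:compratiolcrub} this completes the proof of Theorem~\ref{thm:lcr}.
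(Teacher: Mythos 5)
Your proposal follows essentially the same route as the paper: reduce to a single-slot instance containing the values of $E(\tau)$, let the adversary assign deadline $\infty$ to whatever set $S$ the algorithm processes in slot $1$ and expire everything else, and compare the resulting ratio to $\lcr_{i^\star(\tau)}(\tau)$. The only real difference is that you supply a rigorous justification (the deletion/exchange argument, plus the cases $C_{\alg}\le 0$ and $|S|>m$) for the step the paper merely asserts --- that the worst case reduces to the algorithm processing the top $k$ jobs for some $1\le k\le m$ --- so your write-up is, if anything, more complete than the paper's.
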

\begin{proof} We will proceed via contradiction. Let the hypothesis $H_1 : \ \  \exists \  \tau,\boldsymbol{\sigma}_1 \ \text{such that}  \ \lcr_{i^\star(\tau)}(\tau) > \sfr^{\star}$. For ease of exposition in this proof, let $\lcr^{(\tau)}_{\; \boldsymbol{\sigma}_1} = \lcr_{i^\star(\tau)}(\tau)$.
Consider the set of jobs $E(\tau)$ at slot $\tau$ that is the union of all the non-processed and non-expired jobs till slot $\tau-1$ by the Algorithm min-LCR and the newly arrived jobs at slot $\tau$. Let $v_1, \dots, v_{|E(\tau)|}$ be the values of these jobs, where job $i$ arrived at slot $a_i$ and has deadline $d_i$.

We construct another input sequence $\boldsymbol{\sigma}_2$ that consists of 
$|E(\tau)|$ jobs with values $v_1, \dots, v_{|E(\tau)|}$. All the jobs of $\boldsymbol{\sigma}_2$ arrive at slot $1$, and the deadlines for each job is equal to the $d_i - \tau-a_i$. Important to note that $d_i$ is allowed to be arbitrary and unknown to the online algorithm while known to the $\off$ in both $\boldsymbol{\sigma}_1$ and $\boldsymbol{\sigma}_2$.

Consider a new hypothesis $ H_2 : \ \ \text{Optimum online algorithm $\opt$ on input 
$ \boldsymbol{\sigma}_2$}$
 has competitive ratio lower than $\lcr^{(1)}_{\; \boldsymbol{\sigma}_2}$.
It is easy to check that from the min-LCR algorithm definition, that $\lcr^{(1)}_{\; \boldsymbol{\sigma}_2}=\lcr^{(\tau)}_{\; \boldsymbol{\sigma}_1}$. 
From hypothesis $H_1$, 
$\lcr^{(\tau)}_{\; \boldsymbol{\sigma}_1} > \sfr^\star$, which implies that there exists an optimum online algorithm $\opt$ that on input $ \boldsymbol{\sigma}_2$ has competitive ratio lower than $\lcr^{(1)}_{\; \boldsymbol{\sigma}_2}$, since $\sfr^\star$ is achievable. Hence $H_1 \implies H_2$. Now we will contradict hypothesis $H_2$.

Let the optimal online algorithm $\opt$ on input $\boldsymbol{\sigma}_2$ process any $1\le k\le |E(\tau)| $ jobs at slot $1$, leaving the remaining jobs for later slots. Since the deadlines $d_i$ are arbitrary, let the true deadlines for the $k$ jobs (whatever the choice of $k$ may be for $\opt$) that were processed by $\opt$ in slot $1$ to be $\infty$, while keeping the deadlines for all jobs other than $k$ selected ones to be slot $1$ itself. Thus, there are no available jobs at slot $2$ for $\opt$. 
Thus, the $\opt$ can make the maximum profit by sending the $k$ highest valued jobs of $\boldsymbol{\sigma}_2$ or $E(\tau)$ in slot $1$.

Given that offline optimal $\off$ knows the deadlines, $\off$ processes the $k$ jobs chosen by 
$\opt$ for processing in slot $1$ in $k$ slots starting from the second slot individually, and among the rest of 
$ |E(\tau)| -  k$ jobs processes as many jobs in slot $1$ to maximize its profit in slot $1$, which by definition is $C_{\greedy}(|E(\tau)| -  k,1)$. Thus, $C_{\off}(\boldsymbol{\sigma}_2) = M_{k,1} + C_{\greedy}(|E(\tau)| -  k,1)$, while  $C_{\opt}(\boldsymbol{\sigma}_2)= \sum_{j=1}^{j=k} v^{(j)} \;\; - g(k) = P_{k,1}$. Therefore,
\begin{align}\nn
  \dfrac{C_{\off}(\boldsymbol{\sigma}_2)}{C_{\opt}(\boldsymbol{\sigma}_2)} &= \dfrac{M_{k,1} + C_{\greedy}(|E(\tau)| -  k,1)}{P_{k,1}}= \lcr_{k}(1),\\
    &\geq \lcr_{i^\star(1)}(1).
\end{align}
since $\lcr_{i^\star(1)}(1)$ is the minimum $\lcr$ over all possible $k$. Thus, the optimum online algorithm $\opt$ on input $ \boldsymbol{\sigma}_2$
 cannot have a competitive ratio lower than $\lcr_{i^\star(1)}(1) = \lcr^{(1)}_{\; \boldsymbol{\sigma}_2}$.
Thus, we get  contradiction to hypothesis $H_2$ and equivalently to $H_1$.
\end{proof}
\begin{remark}The typical methods to show lower bounds for online algorithms are: either by explicit construction of a lower bound example, or using Yao's minimax Lemma for randomized algorithms. 
The procedure in this paper is different, however, results in the same conclusion, that the competitive ratio of the proposed min-LCR algorithm is as good as any optimal online algorithm. The novelty is that the technique surprisingly works for any general convex energy cost function, and without having to explicitly evaluate the lower or upper bounds, which is very rarely found in literature.
\end{remark}
After establishing that the min-LCR Algorithm is an optimal online algorithm for any convex cost function $g$, we next concentrate on a specific cost function $g(k) = k^\alpha$ for $\alpha\ge 2$ that is the most popular choice in literature, to derive some concrete bounds on the competitive ratio of the min-LCR Algorithm. Recall that min-LCR algorithm involves finding the number of jobs $k$ that minimizes the $\lcr_k$, which can be exhaustive. We next propose a simplified version of the min-LCR algorithm 
where exactly $k=\left \lfloor{\beta m}\right \rfloor$ or $k=\left \lceil{\beta m}\right \rceil$ jobs are processed in each slot depending on whichever one has lower LCR, and where $\beta$ is the solution of the equation $x^\alpha + x^{\alpha-1} -1=0$. We also consider the natural greedy special case of min-LCR algorithm that always processes $m$ jobs, where $m$ has been defined in the min-LCR algorithm.

We first derive a lower bound on the competitive ratio of any online algorithm, and next show that the simplified min-LCR Algorithm (and consequently the min-LCR algorithm) achieves that lower bound for $\alpha =2$. For $\alpha >2 $, we show that the competitive ratio of the simplified min-LCR Algorithm is at most $3$, while for $\alpha \ge 2.5$ it is at most $2.618$.
\section{$g(k) = k^\alpha$ for $\alpha\ge 2$}
\subsection{Lower Bound on the Competitive Ratio for $\alpha= 2$}
\begin{lemma}\label{lem:lb2} For any online algorithm $\alg$,  $\sfr_{\alg}  \ge \phi + 1$ for cost function $g(k) = k^\alpha$ with $\alpha= 2$, where $\phi = 1/\delta$ and $\delta=\frac{\sqrt{5}-1}{2}$.\end{lemma}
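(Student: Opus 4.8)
The plan is the standard adversary argument for online lower bounds, but leveraging that deadlines are hidden: the adversary reveals all jobs up front with identical values and commits to deadlines only \emph{after} seeing the algorithm's slot-$1$ action, so that every online algorithm is forced into a no-win tradeoff between processing jobs now (cheap energy amortization for $\off$) and saving capacity (wasted, since the leftover jobs are then declared to expire).

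\emph{Construction.} I would take a large even integer $v = 2w$ and let $\boldsymbol{\sigma}_w$ consist of $n \ge 2w$ jobs, all of value $v$, all arriving in slot $1$. Since $\alg$ is deterministic and the jobs are indistinguishable, its slot-$1$ behavior is captured by a single number $k = k(w)$, the number of jobs it processes in slot $1$; if $k=0$ the ratio is already unbounded, and the cases $k > m = w$ only shrink $C_{\alg}$ (processing money-losing jobs) without affecting the bound on $C_{\off}$ below, so it suffices to treat $1 \le k \le w$. The adversary then assigns deadline $\infty$ to the $k$ jobs $\alg$ processed and deadline $1$ to all the rest. Then $\alg$ has no available job after slot $1$, so $C_{\alg}(\boldsymbol{\sigma}_w) = kv - g(k) = k(2w-k)$. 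Meanwhile $\off$ can process the $k$ ``$\infty$''-jobs one per slot in slots $2,\dots,k+1$ for profit $k(v-g(1)) = k(2w-1)$, and in slot $1$ process $w$ of the remaining $n-k \ge w$ jobs for profit $wv - g(w) = w^2$; hence $C_{\off}(\boldsymbol{\sigma}_w) \ge k(2w-1) + w^2$, giving
\[
\frac{C_{\off}(\boldsymbol{\sigma}_w)}{C_{\alg}(\boldsymbol{\sigma}_w)} \;\ge\; g_w(k) \;:=\; \frac{k(2w-1)+w^2}{k(2w-k)} \;=\; \frac{w^2 + 2kw}{k(2w-k)} \;-\; \frac{1}{2w-k}.
\]

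\emph{Minimizing over $k$.} Substituting $x = k/w \in (0,1]$, the first term becomes $f(x) := \frac{1+2x}{2x-x^2}$, whose derivative has the sign of $x^2+x-1$; hence $f$ attains its minimum on $(0,1]$ at $x = \delta = \frac{\sqrt5-1}{2}$, and using $\delta^2 = 1-\delta$ one gets $f(\delta) = \frac{1+2\delta}{3\delta-1} = \frac{1+\delta}{\delta} = \phi+1$. Every integer $k\in\{1,\dots,w\}$ gives $x=k/w\in(0,1]$, so the first term is always $\ge \phi+1$, while $2w-k \ge w$ forces the correction term to be at most $1/w$; therefore $g_w(k) \ge \phi+1-\tfrac1w$ for every admissible $k$. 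Consequently $\sfr_{\alg} \ge \sup_w g_w(k(w)) \ge \sup_w\big(\phi+1-\tfrac1w\big) = \phi+1$, which is the claim.

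\emph{Main obstacle.} The whole argument rests on the single inequality $g_w(k)\ge\phi+1-\tfrac1w$, and essentially all the care is in the discreteness bookkeeping: checking that $\off$'s best single-slot throughput is \emph{exactly} $w^2$ (this is why $v$ is taken even, so the concave quadratic $jv-j^2$ peaks at the integer $j=w$, and why $n\ge 2w$, so $j=w$ stays feasible after removing $\alg$'s $k$ jobs), and that restricting the continuous minimizer $x=\delta$ of $f$ to the grid $\{k/w\}$, together with the $O(1/w)$ correction, still lets $\sup_w$ recover $\phi+1$ exactly rather than something strictly smaller. I would also briefly dispatch the cases $k>m$ (then $\alg$'s slot-$1$ profit is non-positive or strictly smaller while the lower bound on $C_{\off}$ is unchanged, so $g_w(k)$ only increases) and $k=0$ (ratio infinite).
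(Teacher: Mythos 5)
Your proposal is correct and follows essentially the same route as the paper's proof: a batch of equal-valued jobs released in slot~$1$, the adversary retroactively assigning deadline $\infty$ to exactly the jobs the algorithm processed and immediate expiry to the rest, and then minimizing the resulting ratio over $k$ with optimizer $k=\delta w$. The only (cosmetic) difference is that you bound the discretization error explicitly by $1/w$ instead of passing to the limit $z\to\infty$ as the paper does.
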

\begin{proof}
We consider $2z$ jobs each with value $2z$ arriving at the start of slot $1$. Thus, in slot $1$ at most $z$ jobs can be processed by any algorithm since the cost function is $g(k) = k^2$, and the effective cost of processing job $z+1$ or higher is at least $g(z+1) - g(z) > 2z$. Let an online algorithm $\alg$ choose to process $k$ jobs out of the maximum possible $z$. Then we adversarially choose  the deadlines of these $k$ jobs to be infinity $\infty$, while keeping the deadlines of all other remaining jobs to be slot $1$ itself. 
Since the offline optimal $\off$ knows the deadlines, it processes the maximum possible $m$ jobs in slot $1$, while processes the $k$ jobs chosen by the $\alg$ for slot $1$, one at a time starting from slot $2$ in $k$ slots, making a profit of $2z \cdot z- z^2 + 2z \cdot k -k$, while the $\alg$ makes only a profit of $2zk-k^2$. Thus, the competitive ratio of any online algorithm $\alg$ as a function of $k$ is 
$$\sfr_{\alg}  \ge \min_k \frac{2z \cdot z - z^2 + 2z \cdot k -k}{2zk-k^2}.$$
We take the limit as $z \rightarrow \infty$ to get that 
$$\sfr_{\alg}  \ge \lim_{z\rightarrow \infty}\min_k \frac{2z \cdot z - z^2 + 2z \cdot k -k}{2zk-k^2} = \lim_{z\rightarrow \infty}\min_k \frac{2z \cdot z - z^2 + 2z \cdot k}{2zk-k^2} .$$ since $\lim_{z\rightarrow \infty} \frac{-k}{2zk-k^2} = 0$ for any choice of $k$ including the optimizer. It is easy to show that for $\delta=\frac{\sqrt{5}-1}{2}$,  (see Appendix \ref{sec:mincran})
\begin{equation}\label{eq:dummy71}
\delta z = \arg \min_k \frac{2z \cdot z - z^2 + 2z \cdot k}{2zk-k^2},
\end{equation}
 and $\lim_{m\rightarrow \infty} \frac{2z \cdot z - z^2 + 2z \cdot k}{2zk-k^2}\vert_{k=\frac{\sqrt{5}-1}{2}z} = \phi+1$, 
where $\phi = \frac{1}{\delta}$.
\end{proof}
One can proceed similarly and get an expression for the lower bound on the competitive ratio for all values of $\alpha>2$ as 
\begin{equation}\label{eq:dummy100}\max_{z \in \mathbb{Z^+}} \max_{x \le g(z+1) - 2 g(z)+g(z-1)} \min_{1\leq k\leq z} \bigg \{ \dfrac{[k(z^\alpha -(z-1)^\alpha +x -1)] + [z(z^\alpha - (z-1)^\alpha +x) - z^\alpha]}{[k(z^\alpha -(z-1)^\alpha +x) - k^\alpha] } \bigg \},
\end{equation}
however, it is not easy to simplify it analytically, needing a numerical solution as presented in Fig. \ref{fig:lb}. To be more concrete, we next derive a slightly loose lower bound for $\alpha >2$ as follows.

\subsection{Lower Bound on the Competitive Ratio for $\alpha> 2$}
\begin{lemma} For any online algorithm $\alg$,  $\sfr_{\alg}  \ge \sqrt{2} + 1$ for cost function $g(k) = k^\alpha$ for all $\alpha> 2$.
\end{lemma}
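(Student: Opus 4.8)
The plan is to reuse the adversarial construction from Lemma~\ref{lem:lb2} almost verbatim, but with the general cost $g(k)=k^\alpha$, $\alpha>2$. First I would send $N$ identical jobs each of value $v$ at slot~$1$, where $N$ and $v$ are chosen so that the maximum number of jobs any algorithm can profitably process in one slot is exactly $z$ for a suitable integer $z\to\infty$; this requires $v$ to lie strictly between $c_z = z^\alpha-(z-1)^\alpha$ and $c_{z+1}=(z+1)^\alpha-z^\alpha$, so that $\off$ and $\alg$ both process at most $z$ jobs in slot~$1$. As before, if the online algorithm $\alg$ processes $k\le z$ jobs in slot~$1$, the adversary sets the deadlines of exactly those $k$ jobs to $\infty$ and all remaining deadlines to slot~$1$, forcing $C_{\alg}(\boldsymbol\sigma)=kv-k^\alpha$ while $\off$ earns $(zv-z^\alpha)+(kv-k)$ by batching $z$ jobs in slot~$1$ and spreading the $k$ flagged jobs one-per-slot thereafter.

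Next I would take $z\to\infty$ and rescale. Writing $k=xz$ with $x\in(0,1]$ and choosing $v$ proportional to $z^{\alpha-1}$ (say $v = \lambda\, z^{\alpha-1}$ with $\lambda\in(\alpha - o(1),\, \alpha + o(1))$ forced by the $c_z < v < c_{z+1}$ window, so in the limit $\lambda\to\alpha$), the dominant terms give
\begin{equation}\nonumber
\sfr_{\alg} \;\ge\; \lim_{z\to\infty}\min_{1\le k\le z}\frac{zv - z^\alpha + kv - k}{kv - k^\alpha}
\;=\; \min_{0<x\le 1}\frac{\alpha - 1 + \alpha x}{\alpha x - x^\alpha},
\end{equation}
where the $-k$ term vanishes in the limit exactly as in Lemma~\ref{lem:lb2}. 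This reduces the claim to the purely analytic statement that $\min_{0<x\le 1}\frac{\alpha-1+\alpha x}{\alpha x - x^\alpha}\ge \sqrt2+1$ for every $\alpha>2$.

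To prove that analytic bound I would avoid solving the first-order condition exactly (which is transcendental in $\alpha$ and $x$) and instead pick a clean feasible point. Setting $x$ to a fixed constant independent of $\alpha$ — the natural candidate is $x=\sqrt2-1$, the minimizer of the $\alpha=2$ problem after normalization — and using that $x^\alpha \le x^2$ for $x\in(0,1)$, $\alpha\ge 2$, one gets $\alpha x - x^\alpha \ge \alpha x - x^2$, hence $\frac{\alpha-1+\alpha x}{\alpha x - x^\alpha}\ge \frac{\alpha-1+\alpha x}{\alpha x - x^2} = \frac{\alpha(1+x)-1}{x(\alpha-x)}$, and I would check this last expression is $\ge \sqrt2+1$ at $x=\sqrt2-1$ for all $\alpha\ge2$ (it is increasing in $\alpha$ there, with value $\sqrt2+1$ at $\alpha=2$). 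Since this holds for the specific feasible $x$, it holds for the minimum, giving the lemma. The main obstacle is making the limit transition rigorous: one must verify that the admissible value window $(c_z, c_{z+1})$ really does collapse onto $v\sim\alpha z^{\alpha-1}$ after normalization and that the neglected lower-order terms (the "$-k$" in the numerator and the $o(z^{\alpha-1})$ slack in $v$) do not affect the infimum over $x$ — this needs a short uniform-continuity argument in $x$ on the compact interval $(0,1]$, handling the endpoint $x\to 0$ separately since the ratio blows up there and poses no threat to the lower bound.
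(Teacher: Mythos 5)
There is a genuine gap, and in fact two. First, the final step of your argument runs in the wrong logical direction: the quantity you must bound is $\min_{0<x\le 1}\frac{\alpha-1+\alpha x}{\alpha x-x^\alpha}$, because the online algorithm gets to choose $k$ (hence $x$) to its own advantage. Evaluating the ratio at the single feasible point $x=\sqrt{2}-1$ and finding it to be at least $\sqrt{2}+1$ only gives an \emph{upper} bound on that minimum; it says nothing about whether some other $x$ makes the ratio smaller. To lower-bound a minimum you must control the ratio at \emph{every} $x\in(0,1]$. Second, the analytic claim you reduce to is actually false for most of the range $\alpha>2$. Setting the derivative of $\frac{\alpha-1+\alpha x}{\alpha x-x^\alpha}$ to zero gives $x^\alpha+x^{\alpha-1}-1=0$, i.e.\ $x=\beta$ (the same $\beta$ as in the sim-LCR algorithm), and the minimum value is exactly $\frac{1}{\beta}+1$. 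This exceeds $\sqrt{2}+1$ only while $\beta<1/\sqrt{2}$, which fails once $\alpha\gtrsim 2.6$; e.g.\ for $\alpha=3$ one gets $\beta\approx 0.755$ and a minimum of about $2.32<\sqrt{2}+1$, and the minimum tends to $2$ as $\alpha\to\infty$. So the large-$z$ instance with value $v\sim\alpha z^{\alpha-1}$ is intrinsically too weak to prove the lemma; no repair of the limiting argument can rescue it.

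The paper avoids this by abandoning the scaling construction entirely and using a small finite instance: four jobs of value $v=\bigl(1+\tfrac{1}{\sqrt{2}}\bigr)\bigl[g(2)-\sqrt{2}\,g(1)\bigr]$ arriving at slot $1$, with $v$ chosen so that $v<g(3)-g(2)$ (hence any algorithm can process at most $k\in\{1,2\}$ jobs in slot $1$) and so that the two resulting ratios $\frac{3v-2^\alpha-1}{v-1}$ and $\frac{4v-2^\alpha-2}{2v-2^\alpha}$ are \emph{both} equal to $\sqrt{2}+1$. In other words, the adversary deliberately keeps the per-slot capacity at $2$ and tunes $v$ inside the window $(c_2,c_3)$ to equalize the two cases, rather than pushing $z\to\infty$ where the bound degrades. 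If you want to salvage your approach, you would need to redo it with $z$ fixed and small and with $v$ optimized inside the admissible window, which essentially recovers the paper's construction.
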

\begin{proof} Consider the input where four jobs arrive at the beginning of slot $1$ each with value $v = \Big ( 1 + \frac{1}{\sqrt{2}} \Big)[g(2) - \sqrt{2}g(1)] $. Any online algorithm $\alg$ can process $k=1$ or $2$ or $3$ or $4$ jobs in slot $1$. The choice of 
$v$ is such that $v < g(3)- g(2)$, where $g(3)- g(2)$ is the effective  cost of processing the third job in slot $1$. Moreover, since $g$ is a convex function $g(4)- g(3)> g(3)- g(2)$. Therefore processing the third or the fourth job in slot $1$ incurs negative profit, and hence 
at most $2$ jobs can be processed in slot $1$. 

Now depending on the choice of $\alg$ for $k=1,2$, the adversarial choice of deadline will be that those $k$ jobs will have deadline $\infty$, while the remaining $4-k$ jobs will have deadline as slot $1$ itself. A simple enumerative exercise reveals that 
$\sfr_{\alg}\vert_{k=1} = \sqrt{2} + 1, \sfr_{\alg}\vert_{k=2} = \sqrt{2} + 1$. The choice of $v$ is the only non-trivial part in deriving this lower bound.
\end{proof}
\subsection{Upper Bound on the Competitive Ratio for $\alpha\ge 2$}

From here onwards we derive upper bounds on the competitive ratio of a simplified min-LCR Algorithm, where we choose a particular number of jobs to process in every slot. The simplified min-LCR Algorithm is as follows. 
\begin{algorithm}[ht]
\caption{Simplified min-LCR Algorithm (sim-LCR)}
\SetKwInOut{Input}{Input}
\Input{Sequence $\boldsymbol{\sigma} = J_1, \dots, J_n$}
At slot $\tau$, consider the union of all the non-processed jobs by the algorithm so far that have not-expired and the newly arriving jobs in slot $\tau$ and call it $E(\tau)$ \\
Arrange the jobs in $E(\tau)$ in non-increasing order of their payoffs/value $v$ \\
Let $E(\tau)^i = \{\text{first} \ \ i \ \text{jobs in} \ E(\tau)\}$ and $E^{i-}(\tau) = E(\tau) \backslash E(\tau)^i$\\
Let $v^{(i)}$ be the value of job of $E(\tau)$ with the $i^{th}$ highest value, $V^{(i)}$ be the sum of the values of the first $i$ jobs with highest values\\
Let $m =  \max_{\{v^{(j)} - [g(j)-g(j-1)] > 0\}} j $ \\
Let $M_{i,\tau} = V^{(i)} - i g(1),  P_{i,\tau} = V^{(i)} - g(i), C_{\greedy}(i,\tau) = \max_{j \in E^{i-}(\tau) } \{ {V^{(j)} - g(j)} \} $\\
		Let $\beta$ be the solution of $x^\alpha + x^{\alpha-1} -1=0$ \\ 
		Compute $\lcr_{i}(\tau) = \dfrac{M_{i,\tau} + C_{\greedy}(i,\tau)}{P_{i,\tau}}$ as in min-LCR algorithm for $i=\left \lfloor{\beta m}\right \rfloor$ or $i=\left \lceil{\beta m}\right \rceil$\\
		Process either $i=\left \lfloor{\beta m}\right \rfloor$ or $i=\left \lceil{\beta m}\right \rceil$ jobs whichever one has lower $\lcr_{i}$ in slot $\tau$.
\end{algorithm}

\begin{lemma}\label{lem:compratiolcrub1} The competitive ratio of sim-LCR Algorithm is at most $\phi+1$ for $\alpha=2$ or $\alpha \ge 2.5$.
\end{lemma}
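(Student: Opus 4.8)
The plan is to reduce the bound in three stages — to a per‑slot statement, then to a worst‑case value profile, then to an explicit inequality about a rational function — and to settle that inequality by an exact computation when $\alpha=2$ and by uniform estimates when $\alpha\ge 2.5$. For the first stage, note that sim‑LCR, like min‑LCR, processes a prefix of the sorted available set $E(\tau)$, so the proof of Lemma~\ref{lem:compratiolcrub} goes through verbatim with $i^\star(\tau)$ replaced by the index $i(\tau)\in\{\lfloor\beta m\rfloor,\lceil\beta m\rceil\}$ that sim‑LCR selects: one still obtains $C_{\off}(L^p(\tau))\le M_{i(\tau),\tau}$, $C_{\off}(\boldsymbol{\sigma}\backslash L^p)_\tau\le C_{\greedy}(i(\tau),\tau)$, hence $\sfr_{\text{sim-LCR}}(\boldsymbol{\sigma})\le\max_\tau\lcr_{i(\tau)}(\tau)$, and it suffices to show $\min\{\lcr_{\lfloor\beta m\rfloor}(\tau),\lcr_{\lceil\beta m\rceil}(\tau)\}\le\phi+1$ at every slot $\tau$.

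For the second stage, fix $\tau$ and drop it; let $v^{(1)}\ge\cdots\ge v^{(n)}$ be the sorted available values and $m=\max\{j:v^{(j)}>c_j\}$. Since the values are sorted, $C_{\greedy}(i)=\max_t\{\sum_{s=i+1}^{i+t}v^{(s)}-g(t)\}\le\max_t\{V^{(t)}-g(t)\}=P_m$ (the map $j\mapsto V^{(j)}-g(j)$ is unimodal with peak at $m$), so $\lcr_i\le(M_i+P_m)/P_i$. Using $g(i)=\sum_{s\le i}c_s$ this equals $2+\bigl(i^\alpha-i+\sum_{s=i+1}^m(v^{(s)}-c_s)\bigr)/\sum_{s\le i}(v^{(s)}-c_s)$; decreasing the top $i$ values toward $v^{(i)}$ shrinks the denominator while raising the next $m-i$ values toward $v^{(i)}$ enlarges the numerator, so the bound is maximised when $v^{(1)}=\cdots=v^{(m)}=v$, where it equals $\varphi_i(v)=\frac{(m+i)v-i-m^\alpha}{iv-i^\alpha}$. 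As $\varphi_i$ is monotone in $v$, its maximum over the admissible range of $v$ — namely $(c_m,c_{m+1}]$ when at least $m+1$ jobs are available, and $(c_m,\infty)$ when exactly $m$ are, in which case the sharper estimate $C_{\greedy}(i)=(m-i)v-g(m-i)$ keeps $\lcr_i\le m/i$ in the $v\to\infty$ limit — is attained at an endpoint, giving $\lcr_i\le\max\{\Phi(\alpha,m,i),\,m/i\}$ with $\Phi(\alpha,m,i)=\frac{(m+i)c_{m+1}-i-m^\alpha}{ic_{m+1}-i^\alpha}$ and $c_{m+1}=(m+1)^\alpha-m^\alpha$, valid as long as $i$ lies on the increasing branch of $\varphi_i$, i.e.\ $m^\alpha+i\ge i^{\alpha-1}(m+i)$. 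This quantity equals $\beta m+o(m)$ at $i=\beta m$ and decreases in $i$, so the branch is increasing at $i=\lfloor\beta m\rfloor$; one then argues that for each $m$ at least one of $\lfloor\beta m\rfloor,\lceil\beta m\rceil$ simultaneously lies on the increasing branch and yields $\Phi(\alpha,m,\cdot)\le\phi+1$.

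For the third stage, recall that $\beta$ was defined so that $i=\beta m$ minimises $\Phi(\alpha,m,\cdot)$ as $m\to\infty$, so $\lfloor\beta m\rfloor,\lceil\beta m\rceil$ bracket the minimiser; together with $m/i\le m/(\delta m-1)\le\phi+1$ for $i\in\{\lfloor\beta m\rfloor,\lceil\beta m\rceil\}$ and $m\ge 3$ (using $\beta\ge\delta$ for $\alpha\ge2$), the claim reduces to $\min\{\Phi(\alpha,m,\lfloor\beta m\rfloor),\Phi(\alpha,m,\lceil\beta m\rceil)\}\le\phi+1$. For $\alpha=2$ we have $c_{m+1}=2m+1$, $\Phi(2,m,i)=\frac{m^2+m+2mi}{i(2m+1-i)}$, and using $\delta^2=1-\delta$ the inequality $\Phi(2,m,i)\le\phi+1$ becomes the quadratic $(\phi+1)i^2-(2\phi m+\phi+1)i+m^2+m\le0$, whose discriminant is $4\phi m+(\phi+1)^2$, so its roots are $\delta m\pm\Theta(\sqrt m)$; hence the solution set is an interval of width $\Theta(\sqrt m)$ about $\delta m$, which contains $\lfloor\delta m\rfloor$ or $\lceil\delta m\rceil$ for all but a few small $m$ (checked by hand), with the limiting ratio exactly $\phi+1$. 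For $\alpha\ge2.5$ the same scheme is run after bounding $\alpha m^{\alpha-1}\le c_{m+1}\le\alpha(m+1)^{\alpha-1}$ and using $\beta\ge\delta$; these estimates suffice for $\alpha\ge2.5$, which is why the intermediate range $2<\alpha<2.5$ is not covered.

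I expect the main obstacle to be the worst‑case reduction of the second stage: one must verify, through a chain of monotonicity lemmas, that no irregular profile beats the all‑equal configurations — in particular that varying the number of available jobs (which moves the algorithm's $m$ and governs whether the sharp $C_{\greedy}$ exceeds $P_m$) and pushing the common value outside $(c_m,c_{m+1}]$ are both suboptimal. A second subtlety, peculiar to $\alpha=2$, is that the limiting competitive ratio is exactly $\phi+1$, leaving no slack for the finite‑$m$ and rounding error; this is absorbable only because $\delta$ satisfies $\delta^2=1-\delta$, which makes the last stage an exact computation, whereas for $2<\alpha<2.5$ no such closed form is at hand, so the estimates must be loosened and $\alpha\ge 2.5$ is imposed.
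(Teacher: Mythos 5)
Your first stage and the overall scheme match the paper: it reruns Lemma~\ref{lem:compratiolcrub} to get \eqref{eq:dummy61}, bounds $C_{\greedy}(k,\tau)\le V^{(m)}-g(m)$ exactly as you do, and your ``increasing branch'' condition $m^\alpha+i\ge i^{\alpha-1}(m+i)$ is precisely the paper's condition $f(k)=g(k)+\tfrac{m}{k}g(k)-g(m)-k\le 0$, which is where $\beta$ comes from. The genuine gap is in your second stage, where you pin down the worst all-equal profile. After equalization, the common value $v$ of the top $m$ jobs is constrained only by $v>c_m$: it is $v^{(m+1)}$, a \emph{different} job, that must be at most $c_{m+1}$, so the admissible range is $(c_m,\infty)$ and never $(c_m,c_{m+1}]$. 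Consequently, on the decreasing branch of $\varphi_i$ --- which is exactly the delicate case for $\alpha=2$, since $\lceil\delta m\rceil$ can exceed the sign-change point $\gamma\approx\delta m+0.28$ of $f$ --- the supremum is attained as $v\to c_m^{+}$, i.e.\ at $\varphi_i(c_m)=\frac{(m+i)c_m-i-m^\alpha}{i\,c_m-i^\alpha}$, which for $\alpha=2$ is precisely the paper's $\psi(i)=\frac{m^2-m+2mi-2i}{i(2m-1-i)}$ from \eqref{eq:dummy42}, and \emph{not} at your $\Phi(\alpha,m,i)=\varphi_i(c_{m+1})$, which is strictly smaller there. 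Your stage-three quadratic $(\phi+1)i^2-(2\phi m+\phi+1)i+m^2+m\le0$ therefore certifies $\varphi_i(c_{m+1})\le\phi+1$ but not $\sup_{v>c_m}\varphi_i(v)\le\phi+1$, so the $\alpha=2$ case is not closed. The paper closes it by showing $\psi$ is increasing on $(\gamma,\delta m+1]$ with $\psi(\delta m+1)\le\phi+1$ for $m\ge5$, plus hand checks for $m\le4$; some version of that computation is unavoidable because the bound is exactly tight in the limit.

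Three secondary points. First, lowering the top $i$ values to $v^{(i)}$ removes twice as much from the numerator $M_i+P_m$ (they sit in both $V^{(i)}$ and $V^{(m)}$) as from the denominator $P_i$, so the ratio increases under this perturbation only when it already exceeds $2$; you must add that caveat (harmless, since $\phi+1>2$, but absent). Second, $m/(\delta m-1)\le\phi+1$ requires $m\ge(1+\delta)/\delta^2\approx 4.24$, not $m\ge3$. Third, for $\alpha\ge2.5$ none of the $c_{m+1}$ machinery is needed: $\lfloor\beta m\rfloor$ always satisfies $f(\lfloor\beta m\rfloor)\le0$, hence $\lcr_{\lfloor\beta m\rfloor}\le m/\lfloor\beta m\rfloor+1$, and the paper finishes by verifying $m/\lfloor\beta m\rfloor\le\phi$ for all $m$ outside $\{2,4,5,7\}$, which are treated separately; your sketch for that range inherits the endpoint error and is too vague to assess on its own.
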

Combining Lemma \ref{lem:lb2} and Lemma \ref{lem:compratiolcrub1}, we get the following result.
\begin{corollary}\label{cor:compratiolcrub1} sim-LCR Algorithm is an optimal online algorithm for $\alpha=2$ and the optimal competitive ratio for $\alpha=2$ is $\phi+1$.
\end{corollary}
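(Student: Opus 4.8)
The plan is to read the corollary off the two matching bounds already assembled. Lemma~\ref{lem:lb2} asserts that \emph{every} deterministic, deadline-oblivious online algorithm $\alg$ satisfies $\sfr_{\alg}\ge\phi+1$ when $g(k)=k^{2}$; taking the infimum over all admissible $\alg$ on the right-hand side of \eqref{eq:optcompratio} yields $\sfr^{\star}\ge\phi+1$. Lemma~\ref{lem:compratiolcrub1}, specialized to $\alpha=2$, gives $\sfr_{\text{sim-LCR}}\le\phi+1$. Since the sim-LCR Algorithm is itself a legitimate competitor in the minimization defining $\sfr^{\star}$ — it is deterministic and at each slot uses only the set of currently available jobs and their values — we have $\sfr^{\star}\le\sfr_{\text{sim-LCR}}$. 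Chaining the three inequalities, $\phi+1\le\sfr^{\star}\le\sfr_{\text{sim-LCR}}\le\phi+1$, forces $\sfr^{\star}=\sfr_{\text{sim-LCR}}=\phi+1$; in particular sim-LCR attains $\sfr^{\star}$, hence it is an optimal online algorithm for $\alpha=2$.

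The only thing to verify is that the class of algorithms quantified over in Lemma~\ref{lem:lb2} coincides with the class over which the minimum in \eqref{eq:optcompratio} is taken (deterministic, no deadline information, causal decisions based on available-job values), and that the adversary's power is the same in both (offline optimal knows all deadlines and the whole sequence non-causally). Both lemmas are stated under exactly these conventions, so no bridging argument is needed, and checking that sim-LCR belongs to this class is immediate from its description. Consequently there is no genuinely hard step internal to the corollary itself; it is purely a sandwiching argument.

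All the substance sits in Lemma~\ref{lem:compratiolcrub1}, which I would take as given here, but if forced to prove it I would bound $\max_{\tau}\lcr_{i^{\star}(\tau)}(\tau)$ via Lemma~\ref{lem:compratiolcrub} and then estimate $\lcr_{i}(\tau)$ at the single candidate $i\in\{\lfloor\beta m\rfloor,\lceil\beta m\rceil\}$. Fixing a slot and using that the sorted values $v^{(1)}\ge v^{(2)}\ge\cdots$ are nonincreasing with $m$ the largest index for which $v^{(m)}>c_{m}$, one reduces the worst case over value profiles to the near-uniform configuration (all available jobs of essentially equal value), which is precisely the instance of Lemma~\ref{lem:lb2}; the defining relation $x^{\alpha}+x^{\alpha-1}-1=0$ for $\beta$ is exactly what aligns the largest feasible job count with the offline-optimal split, so that $\lcr$ evaluates to $\phi+1$ at $\alpha=2$ and stays below $\phi+1$ for $\alpha\ge2.5$. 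The real obstacle there is not the limiting computation but controlling the integer rounding $\lfloor\beta m\rfloor$ versus $\lceil\beta m\rceil$ for small $m$ and tracking the dependence on $\alpha$ tightly enough to pin down the admissible range (hence the excluded window $2<\alpha<2.5$), which leans on convexity of $g$ and a somewhat delicate case analysis rather than a clean closed form.
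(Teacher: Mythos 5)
Your proposal is correct and matches the paper's own argument: the corollary is obtained exactly by sandwiching the lower bound of Lemma~\ref{lem:lb2} against the upper bound of Lemma~\ref{lem:compratiolcrub1}, with sim-LCR being an admissible competitor in \eqref{eq:optcompratio}. The additional sketch of how Lemma~\ref{lem:compratiolcrub1} would be proved is not needed for the corollary itself and is consistent with the paper's treatment.
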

Proof of Lemma \ref{lem:compratiolcrub1} is as follows.
\begin{proof}
Let $i(\tau)$ be the optimizer among $i=\left \lfloor{\beta m}\right \rfloor$ or $i=\left \lceil{\beta m}\right \rceil$ that minimizes the $\lcr_i$ with the sim-LCR algorithm at slot $\tau$. Then following an identical proof as for Lemma \ref{lem:compratiolcrub}, it follows that the competitive ratio of the sim-LCR algorithm from \eqref{eq:dummy51}, 

\begin{align}\nn
    \sfr_{\text{sim-LCR}}(\boldsymbol{\sigma}) = \frac{ C_{\off}(\boldsymbol{\sigma})}{C_{\text{sim-LCR}}(\boldsymbol{\sigma})}& \le \frac{\sum_{\tau=1}^{\mathsf{last}} M_{i(\tau),\tau} +C_{\greedy}(i(\tau), \tau)}{\sum_{\tau=1}^{\mathsf{last}}P_{i(\tau),\tau}} \\ \label{eq:dummy61}
    &\le \max_\tau \frac{ M_{i(\tau),\tau} +C_{\greedy}(i(\tau), \tau)}{P_{i(\tau),\tau}} = \max_\tau \lcr_{i(\tau)}(\tau).    \end{align}

Thus, to complete the proof, in the following, we show that the $\lcr_k$ for either $k=\left \lfloor{\beta m}\right \rfloor$ or $k=\left \lceil{\beta m}\right \rceil$ is less than $\phi+1$ for the sim-LCR Algorithm at all slots $\tau$.
From the definition of the $\lcr$
\begin{align}\nn
\lcr_k &= \dfrac{M_{k,\tau} + C_{\greedy}(k, \tau)}{P_{k,\tau}},\\  \label{eq:dummy31}
&\stackrel{(a)}\leq \dfrac{\sum_{j=1}^{j=k} v^{(j)} \; - \;k\cdot g(1) + \sum_{j=1}^{j=m} v^{(j)} \; - \;g(m)}{\sum_{j=1}^{j=k} \;v^{(j)} \; - \;g(k)},
\end{align}
$(a)$ follows from the definition of $M_{k,\tau}$  and an upper bound derived for $C_{\greedy}(k, \tau)$ in Appendix \ref{sec:greedyub}.
Recall that the set of available jobs $E(\tau)$ at each slot $\tau$ are arranged in decreasing order of their values, hence, for  $k\leq m$, the average of the values of the first $k$ jobs is greater than the average of the values of the first $m$. Hence, we have $\frac{m}{k}\sum_{j=1}^{j=k} v^{(j)} \geq \sum_{j=1}^{j=m} v^{(j)}$.
Substituting this in \eqref{eq:dummy31} yields
\begin{align}\label{eq:dummy32}
 \lcr_k &\leq \dfrac{\sum_{j=1}^{j=k} v^{(j)} \; - \;k + \frac{m}{k}\big( \sum_{j=1}^{j=k} v^{(j)} \big) \; - \;g(m)}{\sum_{j=1}^{j=k} \;v^{(j)} \; - \;g(k)} = \dfrac{m}{k} +1 + \dfrac{g(k) + \frac{m}{k}g(k) - g(m) -k}{\sum_{j=1}^{j=k} \;v^{(j)} \; - \;g(k)}.
\end{align}
Consider the numerator of the second term $f(k) = g(k) + \frac{m}{k}g(k) - g(m) -k$ at $k=\beta m$. By definition, $f(\beta m) = m^\alpha(\beta^\alpha+\beta^{\alpha-1}-1) -\beta m\le  0$ since $\beta$ is the solution of the equation $x^\alpha+x^{\alpha-1}-1=0$. Moreover, $f(\lfloor\beta m\rfloor) =  
(\lfloor\beta m\rfloor)^\alpha + \frac{m}{(\lfloor\beta m\rfloor)^\alpha}(\lfloor\beta m\rfloor)^\alpha - m^\alpha 
 -\lfloor\beta m\rfloor\le 0$, since 
$(\lfloor\beta m\rfloor)^\alpha + \frac{m}{(\lfloor\beta m\rfloor)^\alpha}(\lfloor\beta m\rfloor)^\alpha - m^\alpha \le m^\alpha(\beta^\alpha+\beta^{\alpha-1}-1)$. Using this in \eqref{eq:dummy32}, we get 
\begin{align}\label{eq:dummy34}
 \lcr_{\lfloor\beta m\rfloor} &\leq \dfrac{m}{k} +1.
\end{align}
For $\alpha \geq 2.5$, by direct computation, one can check that for $m=\{ 1,3,6,8,9,10,11,12 \}$ and $\forall m \geq 13, \lcr_{\lfloor\beta m\rfloor} \leq \phi +1$.
For the remaining values of $m=2,4,5,7 \; \text{ either } \lcr_{\lfloor\beta m\rfloor} \leq \phi +1$ or $\lcr_{\lceil\beta m\rceil} \leq \phi +1$, as derived in Appendix \ref{sec:laborcompm}.

For $\alpha=2$, a little more involved approach is needed as follows. 
For $\alpha =2$, $\beta = \delta$, where $\beta$ is the solution of the equation $x^\alpha+x^{\alpha-1}-1=0$. Let 
$\gamma$ be the positive root of the equation $x^2 + (m-1)x -m^2 = 0$. From \eqref{eq:dummy31}, using $\alpha=2$,
\begin{equation}\label{eq:dummy41}\lcr_k= \dfrac{m}{k} +1 + \dfrac{k^2 + mk - m^2  -k}{\sum_{j=1}^{j=k} \;v^{(j)} \; - \;k^2}.
\end{equation} 
The quadratic equation $k^2 + mk - m^2  -k$ is increasing in the interval $[\delta m, \gamma]$ ($\delta = \frac{\sqrt{5}-1}{2}$) with value $<0$ at $k = \delta m$ and $=0$ at $k=\gamma$ (by definition of $\gamma$). 
So at all intermediate values of $k \in (\delta m, \gamma)$, the value of the equation must be less than $0$. Thus, if $\left \lceil{\delta m}\right \rceil \in [\delta m, \gamma]$, we have 
$ \lcr_{\left \lceil{\delta m}\right \rceil} \leq \dfrac{m}{ \left \lceil{\delta m}\right \rceil} + 1 \leq \dfrac{m}{\delta m } + 1= \phi + 1$.

Next, we consider the case when $\left \lceil{\delta m}\right \rceil \in (\gamma, \delta m + 1]$. The numerator of the third term in the RHS of \eqref{eq:dummy41} is an increasing function for $(\gamma, \delta m + 1]$ from the definition of $\gamma$. Thus, since $\sum_{j=1}^{j=k} \;v^{(j)} \geq k[g(m) - g(m-1)]$ where $v^{(j)}$ is the $j^{th}$ highest value of the job when arranged in a non-increasing order of values.
Thus, from \eqref{eq:dummy41}, we get 
\begin{equation}\label{eq:dummy42}\lcr_k \le \dfrac{m}{k} +1 + \dfrac{k^2 + mk - m^2  -k}{(2m-1)k -k^2}.
\end{equation}
Let $\psi(k) = \dfrac{m}{k} +1 + \dfrac{k^2 + mk - m^2  -k}{(2m-1)k -k^2}$, then it follows that 
$\dfrac{\partial \psi(k)}{\partial k} \ge 0$ for $k \in (\gamma, \delta m + 1]$, and 
$\psi(\delta m + 1) \le \phi+1$ for $m\ge 5$. This implies that $\lcr_k\le \phi+1$ for all $k \in (\gamma, \delta m + 1]$. Thus, even if $\left \lceil{\delta m}\right \rceil \in (\gamma, \delta m + 1]$, 
$ \lcr_{\left \lceil{\beta m}\right \rceil} \le \phi+1$ for $m\ge 5$. For $m=1$, $\lcr_{\left \lceil{\delta m}\right \rceil} \le 2$, while for  
\begin{align}
m=3 &\implies LCR_{\left \lceil{\delta m}\right \rceil} \leq \dfrac{3}{2} + 1+\dfrac{2^2 + 3\cdot2 -3^2 -2}{(2\cdot3 -1 )2 - 2^2}  < \phi+1\\
m=4 &\implies LCR_{\left \lceil{\delta m}\right \rceil} \leq \dfrac{4}{3} + 1 + \dfrac{3^2 + 4\cdot 3 - 4^2 -3}{(2 \cdot 4 -1 )3 -3^2}  < \phi+1.
\end{align}

For $m=2$, the proof is identical to the one for the case of $\alpha \ge 2.5$ for $m=2$ as provided in Appendix \ref{sec:laborcompm}.
\end{proof}

\subsection{Upper Bound on the Competitive Ratio of the min-LCR Algorithm for $\alpha\ge 2$}
We can obtain a slightly loose upper bound for all values of $\alpha\ge 2$ by enforcing the min-LCR Algorithm to process all $m$ jobs, where $m$ is largest number of jobs that can be processed at any slot with positive profit for each job. This restriction can essentially be seen as a greedy analogue of the min-LCR Algorithm, and we call it $\greedy$. 
\begin{lemma}\label{lem:greedy}The competitive ratio of the $\greedy$ algorithm (min-LCR Algorithm with $i^\star =m$) is at most $3$ for all $\alpha >2$.
\end{lemma}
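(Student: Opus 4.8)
The plan is to reduce the claim to an elementary scalar inequality by reusing the upper bound already proved for min‑LCR‑type algorithms. The proof of Lemma~\ref{lem:compratiolcrub} uses only that the \emph{same} index is used in the numerator and the denominator of the per‑slot ratio, not that this index minimizes $\lcr$; exactly as done for sim‑LCR in Lemma~\ref{lem:compratiolcrub1}, it therefore applies verbatim to the fixed rule ``process $m$ jobs at every slot'' and yields $\sfr_{\greedy}(\boldsymbol{\sigma}) \le \max_{\tau}\lcr_{m}(\tau)$ for every input $\boldsymbol{\sigma}$, where $m=m(\tau)$ is the largest number of jobs with positive incremental profit at slot $\tau$. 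Hence it suffices to show $\lcr_m(\tau) = \dfrac{M_{m,\tau} + C_{\greedy}(m,\tau)}{P_{m,\tau}} \le 3$ at every slot.

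The key structural fact, special to the greedy index $i=m$, is that $C_{\greedy}(m,\tau) \le P_{m,\tau}$. Indeed, the map $j \mapsto V^{(j)} - g(j)$ has consecutive increments $v^{(j)} - c_j$, which are positive for $j \le m$ and non‑positive for $j \ge m+1$ by the definition of $m$; so $V^{(j)} - g(j) \le V^{(m)} - g(m) = P_{m,\tau}$ for every $j \ge m$, and since $C_{\greedy}(m,\tau)$ is a maximum of such terms over leftover indices $j \ge m+1$ (read as $0$ when no job is left over), $C_{\greedy}(m,\tau) \le P_{m,\tau}$. Consequently $\lcr_m(\tau) \le 1 + \dfrac{M_{m,\tau}}{P_{m,\tau}}$, and it remains to bound $M_{m,\tau}/P_{m,\tau} \le 2$.

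For $m=1$ this is immediate, since $M_{1,\tau} = V^{(1)} - g(1) = P_{1,\tau}$ (as $g(1)=1$), so the ratio equals $1$. For $m \ge 2$, set $S = V^{(m)}$; because each of the $m$ largest values is at least $v^{(m)} > c_m$, we have $S > m\,c_m = m\big(m^\alpha - (m-1)^\alpha\big)$. With $g(k) = k^\alpha$,
\[
\frac{M_{m,\tau}}{P_{m,\tau}} = \frac{S - m}{S - m^\alpha},
\]
which is strictly decreasing in $S$ on $(m^\alpha,\infty)$ (one checks $m^\alpha > m$ and $m\,c_m > m^\alpha$ for $m \ge 2$, $\alpha \ge 2$, so all values $S > m\,c_m$ lie in this decreasing range). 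Substituting $S > m\,c_m$ and simplifying,
\[
\frac{M_{m,\tau}}{P_{m,\tau}} < \frac{m^\alpha - (m-1)^\alpha - 1}{\,m^\alpha - (m-1)^\alpha - m^{\alpha-1}\,},
\]
and the right‑hand side is at most $2$ exactly when $m^\alpha - (m-1)^\alpha + 1 \ge 2\,m^{\alpha-1}$.

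The only genuinely computational step, which I expect to be the main (albeit routine) obstacle, is verifying this scalar inequality for all integers $m \ge 2$ and all $\alpha \ge 2$. It holds with equality at $\alpha = 2$ for every $m$ (both sides expand to $2m-1$) and at $m = 2$ for every $\alpha$ (both sides equal $2^\alpha$); for fixed $m \ge 3$ one differentiates $h_m(\alpha) = m^\alpha - (m-1)^\alpha + 1 - 2m^{\alpha-1}$ to get $h_m'(\alpha) = (m-2)\,m^{\alpha-1}\ln m - (m-1)^\alpha\ln(m-1)$, and using $(m-1)^{\alpha-1}/m^{\alpha-1} \le (m-1)/m$ for $\alpha \ge 2$ this reduces $h_m'(\alpha) \ge 0$ to the elementary estimate $m(m-2)\ln m \ge (m-1)^2\ln(m-1)$ valid for $m \ge 3$; since $h_m(2) = 0$, we get $h_m(\alpha) \ge 0$ on $[2,\infty)$, and the routine details can be relegated to an appendix. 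Putting the pieces together, $\lcr_m(\tau) \le 1 + 2 = 3$ at every slot of every input, hence $\sfr_{\greedy} \le 3$; everything apart from the scalar estimate follows immediately from Lemma~\ref{lem:compratiolcrub} and the definition of $m$.
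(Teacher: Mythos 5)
Your proof is correct and follows essentially the same route as the paper's: the per-slot reduction $\sfr_{\greedy}\le\max_\tau\lcr_m(\tau)$, the bound $C_{\greedy}(m,\tau)\le P_{m,\tau}$ (the paper's Lemma~\ref{lem:greedyub}), and then $M_{m,\tau}/P_{m,\tau}\le 2$ via $V^{(m)}\ge m\,c_m$, your scalar inequality $m^\alpha-(m-1)^\alpha+1\ge 2m^{\alpha-1}$ being exactly the paper's condition $\Theta(\alpha)\le 1/2$ rearranged and proved by the same monotonicity-in-$\alpha$ argument anchored at $\alpha=2$. One trivial slip: at $\alpha=2$ both sides of your inequality equal $2m$, not $2m-1$; the rest, including the deferred estimate $m(m-2)\ln m\ge(m-1)^2\ln(m-1)$ for $m\ge3$, checks out.
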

Proof can be found in Appendix \ref{app:greedy}.

\newpage
\bibliographystyle{ACM-Reference-Format}
\bibliography{refs}


\begin{thebibliography}{19}


\ifx \showCODEN    \undefined \def \showCODEN     #1{\unskip}     \fi
\ifx \showDOI      \undefined \def \showDOI       #1{#1}\fi
\ifx \showISBNx    \undefined \def \showISBNx     #1{\unskip}     \fi
\ifx \showISBNxiii \undefined \def \showISBNxiii  #1{\unskip}     \fi
\ifx \showISSN     \undefined \def \showISSN      #1{\unskip}     \fi
\ifx \showLCCN     \undefined \def \showLCCN      #1{\unskip}     \fi
\ifx \shownote     \undefined \def \shownote      #1{#1}          \fi
\ifx \showarticletitle \undefined \def \showarticletitle #1{#1}   \fi
\ifx \showURL      \undefined \def \showURL       {\relax}        \fi
\providecommand\bibfield[2]{#2}
\providecommand\bibinfo[2]{#2}
\providecommand\natexlab[1]{#1}
\providecommand\showeprint[2][]{arXiv:#2}

\bibitem[\protect\citeauthoryear{Albers, M{\"u}ller, and Schmelzer}{Albers
  et~al\mbox{.}}{2014}]%
        {albers2014speed}
\bibfield{author}{\bibinfo{person}{Susanne Albers}, \bibinfo{person}{Fabian
  M{\"u}ller}, {and} \bibinfo{person}{Swen Schmelzer}.}
  \bibinfo{year}{2014}\natexlab{}.
\newblock \showarticletitle{Speed scaling on parallel processors}.
\newblock \bibinfo{journal}{{\em Algorithmica\/}} \bibinfo{volume}{68},
  \bibinfo{number}{2} (\bibinfo{year}{2014}), \bibinfo{pages}{404--425}.
\newblock


\bibitem[\protect\citeauthoryear{Azar, Devanur, Huang, and Panigrahi}{Azar
  et~al\mbox{.}}{2015}]%
        {azar2015speed}
\bibfield{author}{\bibinfo{person}{Yossi Azar}, \bibinfo{person}{Nikhil~R
  Devanur}, \bibinfo{person}{Zhiyi Huang}, {and} \bibinfo{person}{Debmalya
  Panigrahi}.} \bibinfo{year}{2015}\natexlab{}.
\newblock \showarticletitle{Speed scaling in the non-clairvoyant model}. In
  \bibinfo{booktitle}{{\em Proceedings of the 27th ACM symposium on Parallelism
  in Algorithms and Architectures}}. ACM, \bibinfo{pages}{133--142}.
\newblock


\bibitem[\protect\citeauthoryear{Azar, Kalyanasundaram, Plotkin, Pruhs, and
  Waarts}{Azar et~al\mbox{.}}{1993}]%
        {azar1993online}
\bibfield{author}{\bibinfo{person}{Yossi Azar}, \bibinfo{person}{Bala
  Kalyanasundaram}, \bibinfo{person}{Serge Plotkin}, \bibinfo{person}{Kirk~R
  Pruhs}, {and} \bibinfo{person}{Orli Waarts}.}
  \bibinfo{year}{1993}\natexlab{}.
\newblock \showarticletitle{Online load balancing of temporary tasks}. In
  \bibinfo{booktitle}{{\em Workshop on algorithms and data structures}}.
  Springer, \bibinfo{pages}{119--130}.
\newblock


\bibitem[\protect\citeauthoryear{Bansal, Chan, Lam, and Lee}{Bansal
  et~al\mbox{.}}{2008}]%
        {bansal2008scheduling}
\bibfield{author}{\bibinfo{person}{Nikhil Bansal}, \bibinfo{person}{Ho-Leung
  Chan}, \bibinfo{person}{Tak-Wah Lam}, {and} \bibinfo{person}{Lap-Kei Lee}.}
  \bibinfo{year}{2008}\natexlab{}.
\newblock \showarticletitle{Scheduling for speed bounded processors}. In
  \bibinfo{booktitle}{{\em International Colloquium on Automata, Languages, and
  Programming}}. Springer, \bibinfo{pages}{409--420}.
\newblock


\bibitem[\protect\citeauthoryear{Bansal, Chan, Pruhs, and Katz}{Bansal
  et~al\mbox{.}}{2009a}]%
        {bansal2009improved}
\bibfield{author}{\bibinfo{person}{Nikhil Bansal}, \bibinfo{person}{Ho-Leung
  Chan}, \bibinfo{person}{Kirk Pruhs}, {and} \bibinfo{person}{Dmitriy Katz}.}
  \bibinfo{year}{2009}\natexlab{a}.
\newblock \showarticletitle{Improved bounds for speed scaling in devices
  obeying the cube-root rule}.
\newblock \bibinfo{journal}{{\em Automata, languages and programming\/}}
  (\bibinfo{year}{2009}), \bibinfo{pages}{144--155}.
\newblock


\bibitem[\protect\citeauthoryear{Bansal, Kimbrel, and Pruhs}{Bansal
  et~al\mbox{.}}{2007}]%
        {bansal2007speed}
\bibfield{author}{\bibinfo{person}{Nikhil Bansal}, \bibinfo{person}{Tracy
  Kimbrel}, {and} \bibinfo{person}{Kirk Pruhs}.}
  \bibinfo{year}{2007}\natexlab{}.
\newblock \showarticletitle{Speed scaling to manage energy and temperature}.
\newblock \bibinfo{journal}{{\em Journal of the ACM (JACM)\/}}
  \bibinfo{volume}{54}, \bibinfo{number}{1} (\bibinfo{year}{2007}),
  \bibinfo{pages}{3}.
\newblock


\bibitem[\protect\citeauthoryear{Bansal, Pruhs, and Stein}{Bansal
  et~al\mbox{.}}{2009b}]%
        {bansal2009speed}
\bibfield{author}{\bibinfo{person}{Nikhil Bansal}, \bibinfo{person}{Kirk
  Pruhs}, {and} \bibinfo{person}{Cliff Stein}.}
  \bibinfo{year}{2009}\natexlab{b}.
\newblock \showarticletitle{Speed scaling for weighted flow time}.
\newblock \bibinfo{journal}{{\it SIAM J. Comput.}} \bibinfo{volume}{39},
  \bibinfo{number}{4} (\bibinfo{year}{2009}), \bibinfo{pages}{1294--1308}.
\newblock


\bibitem[\protect\citeauthoryear{Barcelo, Kling, Nugent, and Pruhs}{Barcelo
  et~al\mbox{.}}{2016}]%
        {Barcelo2016}
\bibfield{author}{\bibinfo{person}{Neal Barcelo}, \bibinfo{person}{Peter
  Kling}, \bibinfo{person}{Michael Nugent}, {and} \bibinfo{person}{Kirk
  Pruhs}.} \bibinfo{year}{2016}\natexlab{}.
\newblock \bibinfo{booktitle}{{\em Optimal Speed Scaling with a Solar Cell}}.
\newblock \bibinfo{publisher}{Springer International Publishing},
  \bibinfo{address}{Cham}, \bibinfo{pages}{521--535}.
\newblock
\showISBNx{978-3-319-48749-6}
\showDOI{%
\url{https://doi.org/10.1007/978-3-319-48749-6_38}}


\bibitem[\protect\citeauthoryear{Baruah, Koren, Mishra, Raghunathan, Rosier,
  and Shasha}{Baruah et~al\mbox{.}}{1991}]%
        {baruah1991line}
\bibfield{author}{\bibinfo{person}{Sanjoy Baruah}, \bibinfo{person}{Gilad
  Koren}, \bibinfo{person}{Bhubaneswar Mishra}, \bibinfo{person}{Arvind
  Raghunathan}, \bibinfo{person}{Louis Rosier}, {and} \bibinfo{person}{Dennis
  Shasha}.} \bibinfo{year}{1991}\natexlab{}.
\newblock \showarticletitle{On-line scheduling in the presence of overload}. In
  \bibinfo{booktitle}{{\em Foundations of Computer Science, 1991. Proceedings.,
  32nd Annual Symposium on}}. IEEE, \bibinfo{pages}{100--110}.
\newblock


\bibitem[\protect\citeauthoryear{Chan, Chan, Lam, Lee, Mak, and Wong}{Chan
  et~al\mbox{.}}{2009}]%
        {chan2009optimizing}
\bibfield{author}{\bibinfo{person}{Ho-Leung Chan}, \bibinfo{person}{Joseph
  Wun-Tat Chan}, \bibinfo{person}{Tak-Wah Lam}, \bibinfo{person}{Lap-Kei Lee},
  \bibinfo{person}{Kin-Sum Mak}, {and} \bibinfo{person}{Prudence~WH Wong}.}
  \bibinfo{year}{2009}\natexlab{}.
\newblock \showarticletitle{Optimizing throughput and energy in online deadline
  scheduling}.
\newblock \bibinfo{journal}{{\em ACM Transactions on Algorithms (TALG)\/}}
  \bibinfo{volume}{6}, \bibinfo{number}{1} (\bibinfo{year}{2009}),
  \bibinfo{pages}{10}.
\newblock


\bibitem[\protect\citeauthoryear{Chan, Edmonds, Lam, Lee, Marchetti-Spaccamela,
  and Pruhs}{Chan et~al\mbox{.}}{2011}]%
        {chan2011nonclairvoyant}
\bibfield{author}{\bibinfo{person}{Ho-Leung Chan}, \bibinfo{person}{Jeff
  Edmonds}, \bibinfo{person}{Tak-Wah Lam}, \bibinfo{person}{Lap-Kei Lee},
  \bibinfo{person}{Alberto Marchetti-Spaccamela}, {and} \bibinfo{person}{Kirk
  Pruhs}.} \bibinfo{year}{2011}\natexlab{}.
\newblock \showarticletitle{Nonclairvoyant speed scaling for flow and energy}.
\newblock \bibinfo{journal}{{\em Algorithmica\/}} \bibinfo{volume}{61},
  \bibinfo{number}{3} (\bibinfo{year}{2011}), \bibinfo{pages}{507--517}.
\newblock


\bibitem[\protect\citeauthoryear{Cot{\'e}, Meyerson, Roytman, Shindler, and
  Tagiku}{Cot{\'e} et~al\mbox{.}}{2010}]%
        {cote2010energy}
\bibfield{author}{\bibinfo{person}{Aaron Cot{\'e}}, \bibinfo{person}{Adam
  Meyerson}, \bibinfo{person}{Alan Roytman}, \bibinfo{person}{Michael
  Shindler}, {and} \bibinfo{person}{Brian Tagiku}.}
  \bibinfo{year}{2010}\natexlab{}.
\newblock \showarticletitle{Energy-efficient online scheduling with deadlines}.
\newblock \bibinfo{journal}{{\em unpublished manuscript\/}}
  (\bibinfo{year}{2010}).
\newblock


\bibitem[\protect\citeauthoryear{Gupta, Krishnaswamy, and Pruhs}{Gupta
  et~al\mbox{.}}{2011}]%
        {gupta2011nonclairvoyantly}
\bibfield{author}{\bibinfo{person}{Anupam Gupta}, \bibinfo{person}{Ravishankar
  Krishnaswamy}, {and} \bibinfo{person}{Kirk Pruhs}.}
  \bibinfo{year}{2011}\natexlab{}.
\newblock \showarticletitle{Nonclairvoyantly scheduling power-heterogeneous
  processors}.
\newblock \bibinfo{journal}{{\em Sustainable Computing: Informatics and
  Systems\/}} \bibinfo{volume}{1}, \bibinfo{number}{3} (\bibinfo{year}{2011}),
  \bibinfo{pages}{248--255}.
\newblock


\bibitem[\protect\citeauthoryear{Han, Lam, Lee, To, and Wong}{Han
  et~al\mbox{.}}{2010}]%
        {han2010deadline}
\bibfield{author}{\bibinfo{person}{Xin Han}, \bibinfo{person}{Tak-Wah Lam},
  \bibinfo{person}{Lap-Kei Lee}, \bibinfo{person}{Isaac~KK To}, {and}
  \bibinfo{person}{Prudence~WH Wong}.} \bibinfo{year}{2010}\natexlab{}.
\newblock \showarticletitle{Deadline scheduling and power management for speed
  bounded processors}.
\newblock \bibinfo{journal}{{\em Theoretical Computer Science\/}}
  \bibinfo{volume}{411}, \bibinfo{number}{40-42} (\bibinfo{year}{2010}),
  \bibinfo{pages}{3587--3600}.
\newblock


\bibitem[\protect\citeauthoryear{Irani, Shukla, and Gupta}{Irani
  et~al\mbox{.}}{2007}]%
        {irani2007algorithms}
\bibfield{author}{\bibinfo{person}{Sandy Irani}, \bibinfo{person}{Sandeep
  Shukla}, {and} \bibinfo{person}{Rajesh Gupta}.}
  \bibinfo{year}{2007}\natexlab{}.
\newblock \showarticletitle{Algorithms for power savings}.
\newblock \bibinfo{journal}{{\em ACM Transactions on Algorithms (TALG)\/}}
  \bibinfo{volume}{3}, \bibinfo{number}{4} (\bibinfo{year}{2007}),
  \bibinfo{pages}{41}.
\newblock


\bibitem[\protect\citeauthoryear{Lam, Lee, To, and Wong}{Lam
  et~al\mbox{.}}{2008}]%
        {lam2008speed}
\bibfield{author}{\bibinfo{person}{Tak-Wah Lam}, \bibinfo{person}{Lap-Kei Lee},
  \bibinfo{person}{Isaac~KK To}, {and} \bibinfo{person}{Prudence~WH Wong}.}
  \bibinfo{year}{2008}\natexlab{}.
\newblock \showarticletitle{Speed scaling functions for flow time scheduling
  based on active job count}. In \bibinfo{booktitle}{{\em European Symposium on
  Algorithms}}. Springer, \bibinfo{pages}{647--659}.
\newblock


\bibitem[\protect\citeauthoryear{Riedel}{Riedel}{2001}]%
        {riedel2001online}
\bibfield{author}{\bibinfo{person}{Marco Riedel}.}
  \bibinfo{year}{2001}\natexlab{}.
\newblock \showarticletitle{Online request server matching}.
\newblock \bibinfo{journal}{{\em Theoretical computer science\/}}
  \bibinfo{volume}{268}, \bibinfo{number}{1} (\bibinfo{year}{2001}),
  \bibinfo{pages}{145--160}.
\newblock


\bibitem[\protect\citeauthoryear{Wierman, Andrew, and Tang}{Wierman
  et~al\mbox{.}}{2009}]%
        {wierman2009power}
\bibfield{author}{\bibinfo{person}{Adam Wierman}, \bibinfo{person}{Lachlan~LH
  Andrew}, {and} \bibinfo{person}{Ao Tang}.} \bibinfo{year}{2009}\natexlab{}.
\newblock \showarticletitle{Power-aware speed scaling in processor sharing
  systems}. In \bibinfo{booktitle}{{\em INFOCOM 2009, IEEE}}. IEEE,
  \bibinfo{pages}{2007--2015}.
\newblock


\bibitem[\protect\citeauthoryear{Yao, Demers, and Shenker}{Yao
  et~al\mbox{.}}{1995}]%
        {yao1995scheduling}
\bibfield{author}{\bibinfo{person}{Frances Yao}, \bibinfo{person}{Alan Demers},
  {and} \bibinfo{person}{Scott Shenker}.} \bibinfo{year}{1995}\natexlab{}.
\newblock \showarticletitle{A scheduling model for reduced CPU energy}. In
  \bibinfo{booktitle}{{\em Foundations of Computer Science, 1995. Proceedings.,
  36th Annual Symposium on}}. IEEE, \bibinfo{pages}{374--382}.
\newblock


\end{thebibliography}
\newpage
\section{Appendix}
\subsection{Numerical Evaluation of the Lower bound \eqref{eq:dummy100}}
\begin{figure}
\centering
\includegraphics[width=3.5in]{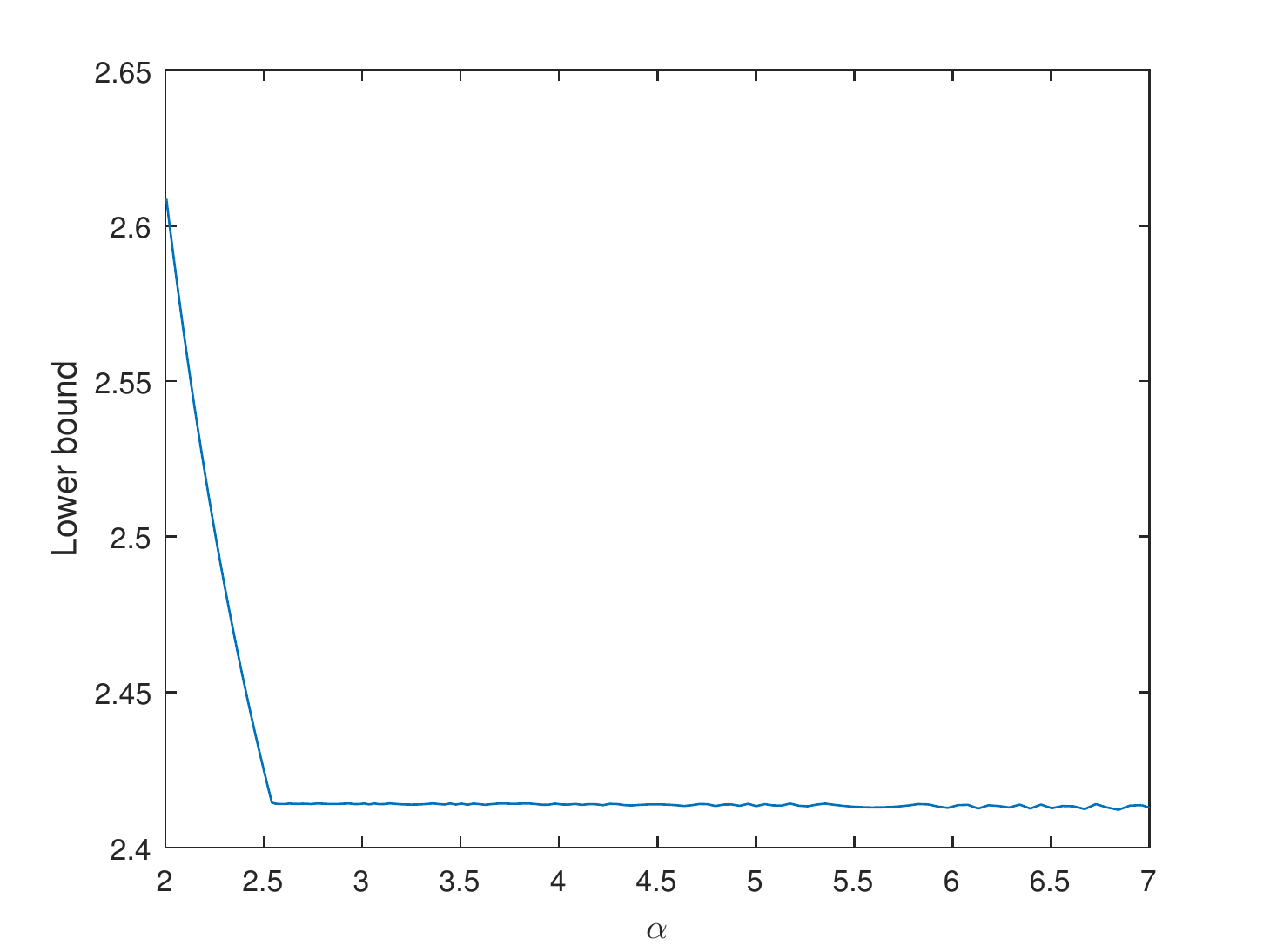}
\caption{Numerical Evaluation of the Lower bound \eqref{eq:dummy100} as a function of $\alpha$.}
\label{fig:lb}
\end{figure}
\subsection{Proof for upper bounding $C_{\greedy}(k,\tau)$ in \eqref{eq:dummy31}}\label{sec:greedyub}
\begin{lemma}\label{lem:greedyub}
$$C_{\greedy}(k,\tau) \leq \sum_{j=1}^{j=m} v^{(j)} - g(m).$$
\end{lemma}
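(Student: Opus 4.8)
The quantity $C_{\greedy}(k,\tau) = \max_{j \in E^{k-}(\tau)}\{V^{(j)} - g(j)\}$ is, by definition, the best single-slot profit obtainable from the job set $E^{k-}(\tau) = E(\tau)\setminus E(\tau)^k$, i.e. from all available jobs except the $k$ highest-valued ones. The plan is to compare this best-response profit, where the optimizer picks some number $j$ of jobs from the low-value tail, against the benchmark $V^{(m)} - g(m) = \sum_{j=1}^m v^{(j)} - g(m)$, which is the profit from processing the $m$ globally highest-valued available jobs in a single slot. The key structural fact I would use is that $m$ is defined precisely so that the $i$-th job has positive incremental profit $v^{(i)} - c_i > 0$ for all $i \le m$ and non-positive for $i > m$; combined with convexity of $g$ (so $c_i$ is nondecreasing) and the fact that $E(\tau)$ is sorted in non-increasing value order, this says $V^{(j)} - g(j)$ is maximized over all $j$ (not just $j \in E^{k-}$) at $j = m$.

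First I would establish that claim: write $V^{(j)} - g(j) = \sum_{i=1}^j (v^{(i)} - c_i)$, observe each summand is positive for $i \le m$ and $\le 0$ for $i > m$ (since $v^{(i)}$ is non-increasing while $c_i$ is non-decreasing), hence the partial-sum sequence increases up to $j=m$ and is non-increasing thereafter, giving $\max_j (V^{(j)} - g(j)) = V^{(m)} - g(m)$. Second, since the maximization defining $C_{\greedy}(k,\tau)$ ranges over a subset of indices—the optimizer is restricted to jobs in $E^{k-}(\tau)$, hence to at most $|E(\tau)| - k$ jobs, and moreover to a specific tail subset whose value-sorted prefix sums are dominated by the global prefix sums $V^{(j)}$—any achievable profit $\sum_{i \in S} v_i - g(|S|)$ with $S \subseteq E^{k-}(\tau)$ and $|S| = j$ satisfies $\sum_{i \in S} v_i \le V^{(j)}$ (sum of any $j$ values is at most the sum of the $j$ largest), so this profit is $\le V^{(j)} - g(j) \le V^{(m)} - g(m)$. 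Taking the max over $S$ yields the claimed bound.

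The only mild subtlety—and the step I would be most careful with—is making sure the restriction to $E^{k-}(\tau)$ rather than all of $E(\tau)$ does not accidentally *help* the optimizer relative to the $V^{(m)} - g(m)$ benchmark; it cannot, because we are upper-bounding $C_{\greedy}(k,\tau)$ by the \emph{unrestricted} single-slot optimum $\max_j(V^{(j)} - g(j))$, and enlarging the feasible set only increases an optimum. So the chain is: $C_{\greedy}(k,\tau) \le \max_{j}(V^{(j)} - g(j)) = V^{(m)} - g(m) = \sum_{j=1}^m v^{(j)} - g(m)$, which is exactly the statement. No heavy computation is needed; the work is entirely in the convexity/monotonicity bookkeeping of the partial sums.
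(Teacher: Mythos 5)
Your proposal is correct and follows essentially the same route as the paper: relax the maximization from $E^{k-}(\tau)$ to all of $E(\tau)$, then use the sign pattern of the increments $v^{(i)}-c_i$ (positive for $i\le m$ by the definition of $m$ together with the non-increasing ordering of values and convexity of $g$, non-positive for $i>m$) to identify the unrestricted maximizer as $j=m$. The telescoping of $V^{(j)}-g(j)$ into $\sum_{i\le j}(v^{(i)}-c_i)$ and the monotonicity bookkeeping are exactly the steps in the paper's proof.
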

\begin{proof} Recall that the effective cost of the $i^{th}$ job is $c_i = g(i)-g(i-1)$.
\begin{align*}
    C_{\greedy}(k,\tau) &\stackrel{(a)}{=} \max_{j \in E^{k-}(\tau)} (V^{(j)} - g(j)),\\
    &\stackrel{(b)}\leq \max_{j \in E(\tau)} (V^{(j)} - g(j)),\\
    &\stackrel{(c)}= (V^{(j^\star)} - g(j^\star)) = \sum_{i=1}^{j^\star} v^{(i)} - c_i,\\
    &\stackrel{(d)}{=} \sum_{j=1}^{j=m} v^{(j)} - g(m),
\end{align*}
$(a)$ follows from the definition, (b) follows since $E^{k-}(\tau) \subseteq E(\tau)$, $(c)$ follows by defining the optimizer $j$ in $(b)$ to be $j^\star$, while $(d)$ is true because of the following inequalities
$$\forall i > m : \;\; v^{(i)} \stackrel{c}{\leq} v^{(m+1)} \stackrel{d}{\leq} c_{m+1} \stackrel{e}{\leq} c_i \implies [v^{(i)} - c_i] \leq 0$$
$$\forall i \leq m : \;\; v^{(i)} \stackrel{c}{\geq} v^{(m)} \stackrel{d}{\geq} c_m \stackrel{e}{\geq} c_i \implies [v^{(i)} - c_i] \geq 0,$$
where $(c)$ follows since jobs are indexed in the non-increasing order of their values, and $(d)$ follows from the definition of $m$, and $(e)$ follows from the convexity of $g(k)$.   
\end{proof}

\subsection{Proof of \eqref{eq:dummy71}}\label{sec:mincran}
\begin{lemma}
$$\arg \min _{k} \bigg\{ \dfrac{z^2 + 2zk}{2zk - k^2} \bigg \} = \delta z.$$
\end{lemma}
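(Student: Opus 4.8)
The plan is to use the scale invariance of the ratio to collapse it to a one-variable calculus problem. Setting $k = xz$ with $x \in (0,2)$ --- the range on which the denominator $2zk - k^2 = z^2 x(2-x)$ is positive, which is the only regime in which the ratio is the quantity of interest --- the factors of $z^2$ cancel and
\[
\frac{z^2 + 2zk}{2zk - k^2} \;=\; \frac{1 + 2x}{x(2-x)} \;=:\; h(x).
\]
Hence $\arg\min_k$ of the original expression equals $z$ times $\arg\min_{x \in (0,2)} h(x)$, and it suffices to minimize $h$ on $(0,2)$.

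Next I would differentiate. A short computation gives
\[
h'(x) \;=\; \frac{2\,x(2-x) - (1+2x)(2-2x)}{\big(x(2-x)\big)^2} \;=\; \frac{2\,(x^2 + x - 1)}{\big(x(2-x)\big)^2}.
\]
The denominator is positive on $(0,2)$, so $h'$ has the sign of $x^2 + x - 1$. This quadratic has exactly one root in $(0,2)$, namely $x = \frac{\sqrt{5} - 1}{2} = \delta$ (the other root $\frac{-1-\sqrt{5}}{2}$ is negative); it is negative on $(0,\delta)$ and positive on $(\delta,2)$. Therefore $h$ is strictly decreasing on $(0,\delta)$ and strictly increasing on $(\delta,2)$, so $x=\delta$ is the unique global minimizer of $h$ on $(0,2)$; equivalently the original ratio is minimized at $k = \delta z$. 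As a consistency check, $h(x)\to+\infty$ as $x\to 0^+$ and as $x\to 2^-$, so the minimizer is genuinely interior, and $h(\delta) = \frac{1+2\delta}{3\delta-1} = \phi+1$ using $\delta^2 = 1-\delta$, matching the value claimed in the proof of Lemma~\ref{lem:lb2}.

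I do not expect a genuine obstacle here: once the homogeneity observation is made, the argument is elementary single-variable optimization. The only points needing a little care are (i) fixing the domain to be $x\in(0,2)$ --- or, in the more restrictive setting of Lemma~\ref{lem:lb2} where $k\le z$, the sub-interval $x\in(0,1]$, which still contains $\delta$ since $\delta<1$ --- so that the denominator stays positive and the ratio remains meaningful, and (ii) reading $\arg\min_k$ over continuous $k$, which is precisely what is used in Lemma~\ref{lem:lb2}, where the lower bound is taken in the limit $z\to\infty$ so that $k/z$ sweeps out all of $(0,2)$.
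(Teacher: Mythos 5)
Your proof is correct and takes essentially the same route as the paper: differentiate the ratio and locate the unique positive critical point at $k=\delta z$ (equivalently $x=\delta$ after your normalization $k=xz$), with the value $\phi+1$ there. Your sign analysis of $h'$ via the quadratic $x^2+x-1$ is in fact a slightly cleaner certification of \emph{global} optimality than the paper's second-derivative test plus boundary evaluations at $k=1$ and $k=z$, but the underlying argument is the same.
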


\begin{proof}
Taking the derivative, we get  
\begin{align}
    \dfrac{\partial }{\partial k}\;\bigg ( \dfrac{2zk + z^2}{2zk-k^2} \bigg ) = \dfrac{2z(k^2 +zk -z^2)}{(2zk-k^2)^2},
\end{align}
which when equated to zero, we get $k^\star = \delta z$.
Taking the second derivative and evaluating at $\delta m$, we show that $k^\star = \delta m$  is a local minima as follows.
\begin{align*}
    \dfrac{\partial ^2 }{\partial k^2}\;\bigg ( \dfrac{2zk + z^2}{2zk-k^2} \bigg ) &= \dfrac{[2z (2zk-k^2)]\;[2z^3 -6z^2k+4z^3+3zk^2]}{(2zk-k^2)^4},\\
    \implies \dfrac{\partial ^2 }{\partial k^2}\;\bigg ( \dfrac{2zk + z^2}{2zk-k^2} \bigg ) \bigg \vert _{k=\delta z} &\stackrel{(a)}{>} 0,
\end{align*}
where $(a)$ follows since $$[2k^3 -6z^2k+4z^3+3zk^2] \Big \vert _{k= \delta z} = [2z^3(2\delta^3 +3\delta^2 -6\delta +4)] > 0.$$
Evaluating at the boundary points of $k=1$, we get $\dfrac{z^2 +2z}{2z -1} > \phi+1$  and at $k=z$, $\dfrac{z^2 +2z^2}{2z^2 -z^2} = 3 \geq \phi+1$.
Thus, we conclude that $k=\delta z$ is the minima.\\
\end{proof}

\subsection{Proof of bounding $ \lcr_{\lfloor\beta m\rfloor}$ and $\lcr_{\lceil\beta m\rceil}$ for $m=2,4,5,7$} \label{sec:laborcompm}

\begin{lemma}
For $m=\{2,4,5,7\} \; \forall \alpha \geq 2.5 \;$ either $\; \lcr_{\left \lfloor{\beta m}\right \rfloor} $ or $\lcr_{\left \lceil{\beta m}\right \rceil}$ is $\leq \phi+1$.
\end{lemma}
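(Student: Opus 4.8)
The plan is to finish the analysis of the simplified min-LCR algorithm by handling the small residual cases $m\in\{2,4,5,7\}$ that were not covered by the general argument. Recall that for $\alpha=2$ we already showed $\lcr_{\lfloor\beta m\rfloor}\le\tfrac{m}{k}+1$ once the numerator $f(k)=g(k)+\tfrac{m}{k}g(k)-g(m)-k$ of the correction term in \eqref{eq:dummy32} is $\le 0$, and that $\tfrac{m}{\lfloor\beta m\rfloor}+1\le\phi+1$ fails only for a few small $m$ because the floor can pull $\lfloor\beta m\rfloor$ strictly below $\delta m$. So the task here is: for each $m\in\{2,4,5,7\}$ and each $\alpha\ge 2.5$, exhibit one of $k=\lfloor\beta m\rfloor$ or $k=\lceil\beta m\rceil$ for which $\lcr_k\le\phi+1$.

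The key steps I would carry out, in order. First, pin down $\lfloor\beta m\rfloor$ and $\lceil\beta m\rceil$ for each $m$: since $\beta\in(0,1)$ solves $x^\alpha+x^{\alpha-1}-1=0$, $\beta$ is increasing in $\alpha$, so I bound $\beta$ in an interval (for $\alpha\in[2.5,\infty)$, $\beta$ ranges between the $\alpha=2.5$ value and $1$), which for $m=2$ gives $\{\lfloor\beta m\rfloor,\lceil\beta m\rceil\}\subseteq\{1,2\}$, for $m=4$ gives $\{2,3,4\}$, for $m=5$ gives $\{3,4,5\}$, for $m=7$ gives $\{4,5,6,7\}$ — and in each case only two consecutive integers are actually attained. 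Second, for the candidate $k=\lceil\beta m\rceil$ use the clean bound from \eqref{eq:dummy32}: when $f(k)\le 0$ we get $\lcr_k\le\tfrac{m}{k}+1$, and since $k=\lceil\beta m\rceil\ge\beta m$, we have $\tfrac{m}{k}\le\tfrac1\beta$; I then check $\tfrac1\beta+1\le\phi+1$, i.e. $\beta\ge\delta$, which I verify holds for all $\alpha\ge 2.5$ by comparing the defining equations (at $x=\delta$, $\delta^\alpha+\delta^{\alpha-1}-1<\delta^2+\delta-1=0$ for $\alpha>2$, so the root $\beta$ exceeds $\delta$). Third, confirm $f(\lceil\beta m\rceil)\le 0$: by convexity/monotonicity of $t\mapsto g(t)+\tfrac mt g(t)$ near $t=\beta m$ one shows $f$ stays $\le 0$ just past $\beta m$, or else $k=\lfloor\beta m\rfloor$ already has $f\le 0$ (as shown in the main text) and $\tfrac{m}{\lfloor\beta m\rfloor}+1\le\phi+1$ can be checked by direct substitution of the concrete integer values. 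Fourth, wherever the floor value genuinely violates $\tfrac{m}{k}+1\le\phi+1$, fall back on the ceiling value and the second-term estimate, and where even that is delicate (the borderline $m=2$ case) I would reuse the sharper bound $\sum_{j=1}^k v^{(j)}\ge k[g(m)-g(m-1)]$ exactly as in the $\alpha=2$, $m\ge 5$ sub-case to get an explicit rational upper bound on $\lcr_k$ and compare numerically with $\phi+1=\tfrac{3+\sqrt5}{2}\approx 2.618$.

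The main obstacle I expect is the $m=2$ (and to a lesser extent $m=4$) case, where $\lceil\beta m\rceil$ can equal $m$ itself — then $\lcr_m=\tfrac{M_{m,\tau}+C_{\greedy}(m,\tau)}{P_{m,\tau}}$ and the slack term $P_{m,\tau}=\sum v^{(j)}-g(m)$ can be small if the job values are barely above the effective cost $c_m$, so the crude bound $\tfrac mk+1$ is not enough and one must actually use a lower bound on the job values (namely $v^{(j)}\ge c_m=g(m)-g(m-1)$ for all $j\le m$, valid by definition of $m$). That turns $\lcr_m$ into a function of $g(1),g(m-1),g(m)$ only — i.e. of $\alpha$ through $1,\,(m-1)^\alpha,\,m^\alpha$ — which I would then show is monotone in $\alpha$ on $[2.5,\infty)$ and bounded by $\phi+1$ at the endpoint $\alpha=2.5$, with the $\alpha\to\infty$ limit handled separately since $m^\alpha$ dominates. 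The remaining cases $m=4,5,7$ should be genuinely routine once the interval for $\beta$ is fixed, since there $\lceil\beta m\rceil<m$ and the clean bound $\tfrac{m}{\lceil\beta m\rceil}+1\le\tfrac1\beta+1\le\phi+1$ closes them immediately.
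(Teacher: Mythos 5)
Your skeleton (pin down $\lfloor\beta m\rfloor$ and $\lceil\beta m\rceil$, verify $\beta\ge\delta$ so that $\tfrac{m}{k}+1\le\tfrac{1}{\beta}+1\le\phi+1$ whenever the correction term in \eqref{eq:dummy32} is nonpositive, otherwise fall back to the other candidate) matches the paper's, and your argument that $\beta\ge\delta$ for $\alpha\ge 2$ is correct. But there are two genuine gaps. First, your Step 3 asserts $f(\lceil\beta m\rceil)\le 0$ because ``$f$ stays $\le 0$ just past $\beta m$''; this is false in exactly the cases at hand. For $m=4$, $\alpha=2.5$ one has $\beta\approx 0.70$, so $\lceil\beta m\rceil=3$ and $f(3)=\tfrac{7}{3}\cdot 3^{2.5}-4^{2.5}-3\approx 1.4>0$ (the derivative of $f$ near $\beta m$ is of order $\alpha(\beta m)^{\alpha-1}$, far too large for $f(\beta m)=-\beta m$ to keep $f$ negative out to the next integer). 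Hence the clean bound $\lcr_k\le\tfrac{m}{k}+1$ is unavailable at the ceiling, and your closing claim that $m=4,5,7$ are ``closed immediately'' by it is wrong. On the low end of the $\alpha$-range for these $m$ one must bound the \emph{positive} correction term explicitly using $\sum_{j=1}^{k}v^{(j)}\ge k[g(m)-g(m-1)]$ and check the resulting rational function of $\alpha$ against $\phi+1$ case by case; on the high end the floor jumps up to a value with $\tfrac{m}{\lfloor\beta m\rfloor}+1\le\phi+1$ and the clean bound takes over. That two-regime split in $\alpha$ (e.g.\ $[2.5,2.945]$ vs.\ $[2.945,\infty)$ for $m=4$) is the actual content of the paper's computations; your proposal reserves the denominator bound for $m=2$ only.

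Second, and more seriously, your plan for $m=2$ cannot work. Substituting the universal lower bound $v^{(j)}\ge c_2=g(2)-g(1)$ into $\lcr_2\le 1+\frac{v^{(1)}+v^{(2)}-2g(1)}{v^{(1)}+v^{(2)}-g(2)}$ gives, at $v^{(1)}+v^{(2)}=2(2^{\alpha}-1)$, the value $1+\frac{2^{\alpha+1}-4}{2^{\alpha}-2}=3$ for \emph{every} $\alpha$; this is not an artifact of the bound: with four jobs of value $c_2+\epsilon$ one checks $\lcr_2\to 3$ as $\epsilon\to 0$, while $\lcr_1\to 2$. So no bound on $\lcr_{\lceil\beta m\rceil}=\lcr_2$ that depends on $\alpha$ alone can reach $\phi+1$, and monotonicity in $\alpha$ is beside the point. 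The missing idea is a case split on the data: the paper compares $v^{(2)}$ with the threshold $\bigl(1+\tfrac{1}{\sqrt2}\bigr)[g(2)-\sqrt2\,g(1)]$ and shows $\lcr_1\le\phi+1$ below it and $\lcr_2\le\phi+1$ above it. Since the lemma only claims that \emph{one} of the two candidates is good (and sim-LCR picks the better one), this value-dependent dichotomy is exactly what is needed and is absent from your proposal.
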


\begin{proof} $m=2$: 
$\alpha \geq 2 \implies {\left \lfloor{\beta m}\right \rfloor} = 1 \text{  and  } \left \lceil{\beta m}\right \rceil = 2.$
When $v^{(2)} \leq \Big (1+ \frac{1}{\sqrt{2}} \Big)[g(2) -\sqrt{2} g(1)]$, 
$$\lcr_1 \stackrel{p}{=} \dfrac{[v^{(1)} - g(1)] + [v^{(2)} + v^{(3)} -g(2)] }{ [v^{(1)} - g(1)] } \stackrel{q}{\leq} 1 + \dfrac{2v^{(2)} - g(2)}{v^{(2)} -g(1)} \stackrel{r}{\leq} \phi+1,$$ while when $v^{(2)} \geq \Big (1+ \frac{1}{\sqrt{2}} \Big)[g(2) -\sqrt{2} g(1)]$, 
$$\lcr_2 \stackrel{p}{=} \dfrac{[v^{(1)}+v^{(2)}-2g(1)] + [v^{(3)}+v^{(4)} -g(2)]}{[v^{(1)}+v^{(2)}-g(2)]} \stackrel{q}{\leq} 1 + \dfrac{2v^{(2)} -2g(1)}{2v^{(2)} - g(2)} \stackrel{r}{\leq} \phi+1,$$
where $p$ follows from the definition of $\lcr$, $q$ follows from the fact that $v^{(1)} \geq v^{(2)} \geq v^{(3)} \geq v^{(4)}$ and $r$ follows by simple evaluation in the appropriate range of choice for $v^{(2)}$.
$m=4$: 
When $\alpha \in [2.5,2.945] \implies \left \lceil{\beta m}\right \rceil = \left \lceil{\beta * 4}\right \rceil = 3 $\\
$$
   \implies \lcr_{\left \lceil{\beta m}\right \rceil} 
    \stackrel{a}{=} \dfrac{4}{3} + 1 + \dfrac{7\cdot 3^{(\alpha-1)} - 4 \cdot 4^{(\alpha-1)}}{12(4^{(\alpha-1)} - 3^{(\alpha-1)})}\\
    \stackrel{b}{\leq} 2 +  \dfrac{1}{4 \Big(\frac{4^{(\alpha-1)}}{3^{(\alpha-1)}} -1 \Big )}\\
    \stackrel{c}{\leq} \phi+1,
$$
$$\alpha \geq 2.945 \implies \left \lfloor{\beta m}\right \rfloor \geq 3 \implies \lcr_{\left \lfloor{\beta m}\right \rfloor} \stackrel{d}{\leq} \dfrac{m}{k} +1 = \frac{4}{3}+1 \leq \phi+1.$$\\\\
$m=5$:
When $\alpha \in [2.5,3.641] \implies \left \lceil{\beta m}\right \rceil = 4$ \\
$$ \implies
    \lcr_{\left \lceil{\beta m}\right \rceil} \stackrel{a}{\leq} \dfrac{5}{4} + 1 + \dfrac{9\cdot 4^{(\alpha-1)} - 5 \cdot 5^{(\alpha-1)}}{20(5^{(\alpha-1)} - 4^{(\alpha-1)})}
    \stackrel{b}{=} 2 + \dfrac{1}{5 \Big ( \frac{5^{(\alpha-1)}}{4^{(\alpha-1)}} -1 \Big )}
    \stackrel{e}{\leq} \phi+1,$$
$$ \alpha \geq 3.641 \implies \left \lfloor{\beta m}\right \rfloor \geq 4 \implies \lcr_{\left \lfloor{\beta m}\right \rfloor} \stackrel{d}{\leq} \dfrac{m}{k} +1 = \frac{5}{4}+1 \leq \phi+1. $$\\\\
$m=7$:
$$\alpha \in [2.5,2.6] \implies \left \lceil{\beta m}\right \rceil = 5 \implies \lcr_{\left \lceil{\beta m}\right \rceil} \stackrel{a}{\leq} \; \dfrac{7}{5} + 1 + \dfrac{\frac{12}{5}5^{\alpha} - 7^{\alpha} -5 }{ 5[7^{\alpha} - 6^{\alpha} - 5^{\alpha}] } \leq  \; \phi+1,$$
$$ \alpha \geq 2.6 \implies \left \lfloor{\beta m}\right \rfloor \geq 5 \implies \lcr_{\left \lfloor{\beta m}\right \rfloor} \stackrel{d}{\leq} \dfrac{m}{k} +1 = \frac{7}{5}+1 \leq \phi+1.$$\\\\

In all the cases, $a$ follows from \eqref{eq:dummy32} along with the fact that $\sum_{j=1}^{j=k} \;v^{(j)} \geq k[g(m) -g(m-1)]$ and $g(k) = k^{\alpha}$, $(b)$ follows by re-arranging terms, $c$ is true because 
$\forall \; \alpha \geq 2.5 \;$ we have $ \Big(\frac{4^{(\alpha-1)}}{3^{(\alpha-1)}} -1 \Big ) -1 \geq 0.53$, $d$ follows from \eqref{eq:dummy34}, and $e$ is true because $\forall \alpha \geq 2.5$ we have $ \Big ( \frac{5^{(\alpha-1)}}{4^{(\alpha-1)}} -1 \Big ) \geq 0.39$.
\end{proof}

\subsection{Proof of Lemma \ref{lem:greedy}}\label{app:greedy}
\begin{proof}
From \eqref{eq:dummy61}, $\sfr_{\text{sim-LCR}}(\boldsymbol{\sigma}) \le \max_\tau \lcr_{i(\tau)}$. In this greedy case, $i(\tau) = m$, always, and hence to prove the result we show that $ \lcr_{m} \le 3$ for all slot $\tau$ and all inputs $\boldsymbol{\sigma}$.
From \eqref{eq:dummy31}, we have that
$\lcr_k \leq \dfrac{m}{k} +1 + \dfrac{g(k) + \frac{m}{k}g(k) - g(m) -k}{\sum_{j=1}^{j=k} \;v^{(j)} \; - \;g(k)}$.
Choosing $k=m$ and substituting it in the above equation, 
\begin{align*}
    \lcr_m &\leq \dfrac{m}{m} +1 + \dfrac{g(m) + \frac{m}{m}g(m) - g(m) -m}{\sum_{j=1}^{j=m} \;v^{(j)} \; - \;g(m)},\\
    &\leq 1 + 1 + \dfrac{g(m)-m}{  \sum_{j=1}^{j=m} \;v^{(j)} \; - \;g(m) },\\
    &\stackrel{(a)}{=} 1 + \dfrac{  \sum_{j=1}^{j=m} \;v^{(j)} \; - \;m }{\sum_{j=1}^{j=m} \;v^{(j)} \; - \;g(m)},\\
    &= 1 + \dfrac{1}{\frac{\sum_{j=1}^{j=m} \;v^{(j)} \; - \;g(m)}{\sum_{j=1}^{j=m} \;v^{(j)} \; - \;m}},\\
    &= 1 + \dfrac{1}{1- \; \frac{g(m) - m}{\sum_{j=1}^{j=m} \;v^{(j)} \; - \;m}},\\
    &\stackrel{(b)}{=} 1 + \dfrac{1}{1-h(m,g)},
\end{align*}
where $(a)$ follows by adding the second and third terms, and $(b)$ by defining of $h(m,g)= \frac{g(m) - m}{\sum_{j=1}^{j=m} \;v^{(j)} \; - \;m}$. 

Next, we prove that $h(m,g) \le 1/2$, for $\alpha \ge 2$, from which it follows that 
$\lcr_m \le 3$. Thus, the competitive ratio of the min-LCR Algorithm is at most $3$ for all $\alpha >2$.

From the definition of $m$, $m =  \max_{\{v^{(j)} - [g(j)-g(j-1)] > 0\}} j $,  all the $m$ highest valued jobs have a value greater than or equal to the effective cost of processing them in slot $m$, i.e., $v_k \ge c_k= v(k)- v(k-1)$ for $k\le m$,
and hence $\sum_{j=1}^{m} v^{(j)} \geq m [g(m) - g(m-1)]$, thus, 
$$ h(m,g) \leq \dfrac{g(m) - m}{m [g(m) - g(m-1)] -\; m}.$$\\
Let $\Theta(\alpha) =  \dfrac{g(m) - m}{m [g(m) - g(m-1)] -\; m} = \dfrac{m^{\alpha} - m}{m [m^{\alpha} - (m-1)^{\alpha}] -\; m}$, since $g(k) = k^{\alpha}$.

\begin{lemma}
$\forall \alpha \geq 2 \;\;\;\; \dfrac{\partial \Theta(\alpha)}{\partial \alpha} \leq 0.$
\end{lemma}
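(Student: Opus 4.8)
The plan is to compute the sign of $\Theta'(\alpha)$ directly. First I would rewrite $\Theta$ in a cleaner form: dividing numerator and denominator of $\Theta(\alpha)=\frac{m^\alpha-m}{m[m^\alpha-(m-1)^\alpha]-m}$ by $m$ gives
\[
\Theta(\alpha)=\frac{N(\alpha)}{D(\alpha)},\qquad N(\alpha)=m^{\alpha-1}-1,\qquad D(\alpha)=m^\alpha-(m-1)^\alpha-1 .
\]
The cases $m\in\{0,1\}$ are trivial for Lemma~\ref{lem:greedy} (there $h(m,g)=0\le\tfrac12$), so assume $m\ge2$. Both $N$ and $D$ are positive on $\alpha\ge2$ (since $N(\alpha)\ge m-1\ge1$, and $D$ is increasing in $\alpha$ with $D(2)=2m-2\ge2$), so the sign of $\Theta'$ equals the sign of $\frac{N'}{N}-\frac{D'}{D}$, and it suffices to prove
\[
\frac{m^{\alpha-1}\ln m}{m^{\alpha-1}-1}\ \le\ \frac{m^\alpha\ln m-(m-1)^\alpha\ln(m-1)}{m^\alpha-(m-1)^\alpha-1}.
\]

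Next I would clear denominators. Multiplying out, the term $m^{2\alpha-1}\ln m$ cancels on both sides, and after grouping the remaining terms the inequality becomes
\[
(m-1)\,m^{\alpha-1}\ln m\ \le\ m^{\alpha-1}(m-1)^\alpha\ln\tfrac{m}{m-1}+(m-1)^\alpha\ln(m-1).
\]
Dividing through by $(m-1)^\alpha$, setting $x:=\alpha-1\ge1$ and $r:=\tfrac{m}{m-1}>1$, and using $\ln m=\ln(m-1)+\ln r$, this reduces (after dividing by $r^x-1>0$ and by $\ln r>0$) to the compact inequality
\[
\frac{\ln(m-1)}{\ln r}\ \le\ \frac{m^x-r^x}{r^x-1}.\qquad(\star)
\]

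The crux will be the following monotonicity fact: for $R\ge r>1$, the map $x\mapsto\frac{R^x-r^x}{r^x-1}=\frac{R^x-1}{r^x-1}-1$ is non-decreasing on $x>0$. I would prove it by differentiating the logarithm: $\frac{d}{dx}\ln\frac{R^x-1}{r^x-1}=\psi(R)-\psi(r)$, where $\psi(t):=\frac{t^x\ln t}{t^x-1}=\tfrac1x\,\chi(t^x)$ with $\chi(u):=\frac{u\ln u}{u-1}$; since $\chi'(u)=\frac{(u-1)-\ln u}{(u-1)^2}\ge0$ for $u>1$, the function $\chi$ is increasing, hence so is $\psi$, hence $\psi(R)\ge\psi(r)$. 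Applying this with $R=m$ and evaluating the right-hand side of $(\star)$ at $x=1$ gives $\frac{m-r}{r-1}=m(m-2)$, so $(\star)$ follows once I check $\ln(m-1)\le m(m-2)\ln\tfrac{m}{m-1}$. For that I would use $\ln\tfrac{m}{m-1}=\ln\!\bigl(1+\tfrac1{m-1}\bigr)\ge\frac{1/(m-1)}{1+1/(m-1)}=\frac1m$, giving $m(m-2)\ln\tfrac{m}{m-1}\ge m-2\ge\ln(m-1)$, the last step being $\ln t\le t-1$ at $t=m-1$. This establishes $\Theta'(\alpha)\le0$; combined with $\Theta(2)=\tfrac12$ it yields $\Theta(\alpha)\le\tfrac12$ for $\alpha\ge2$, which is exactly what Lemma~\ref{lem:greedy} needs.

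The step I expect to be the main obstacle is recognizing that, after clearing denominators, the condition $\Theta'(\alpha)\le0$ collapses to the single clean inequality $(\star)$ whose entire $\alpha$-dependence lives in the monotone quantity $\frac{R^x-1}{r^x-1}$; pinning down that reduction and the right auxiliary function $\chi(u)=\frac{u\ln u}{u-1}$ is where the real work sits. Everything afterward — the elementary estimates $\ln(1+y)\ge y/(1+y)$ and $\ln t\le t-1$, and the arithmetic $\frac{m-r}{r-1}=m(m-2)$ — is routine.
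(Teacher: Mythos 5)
Your proof is correct, but it takes a genuinely different route from the paper's. The paper differentiates $\Theta$ by the quotient rule and factors the resulting numerator into a manifestly positive quantity times $\bigl[\chi(m^{\alpha-1})-\chi((m-1)^{\alpha-1})\bigr]$ with $\chi(u)=\frac{u\ln u}{u-1}$; since $\chi$ is increasing and $m^{\alpha-1}\ge(m-1)^{\alpha-1}$, the derivative is $\le 0$ in one stroke, with no base case and no $\alpha$-dependence left to control. You instead compare logarithmic derivatives, clear denominators down to the inequality $(\star)$, and then deploy the \emph{same} auxiliary function $\chi$ in a different role: to show the right side of $(\star)$ is monotone in $x=\alpha-1$ (comparing $m^x$ against $r^x$ with $r=\tfrac{m}{m-1}$ rather than $m^{\alpha-1}$ against $(m-1)^{\alpha-1}$), after which you must anchor at $x=1$ and invoke $\ln(1+y)\ge\tfrac{y}{1+y}$ and $\ln t\le t-1$ to finish. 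Your reduction to $(\star)$ is arguably more transparent than the paper's unexplained factorization of the quotient-rule numerator (a nontrivial algebraic identity the reader must verify), and you handle the degenerate case $m=1$ cleanly where the paper has to remark that $\log(m-1)$ is undefined; the price is a longer argument with an extra evaluation step and two additional elementary estimates that the paper's direct factorization avoids entirely. Both proofs check out; the shared kernel is the monotonicity of $\frac{u\ln u}{u-1}$.
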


\begin{proof}
Taking the derivative and rearranging the terms we get 
\begin{align*}
    \dfrac{\partial \Theta(\alpha)}{\partial \alpha} &= -\dfrac{  m^2 (m-1)[ m^{\alpha -1}((m-1)^{\alpha -1}-1)\log(m) - (m-1)^{\alpha-1}(m^{\alpha -1} -1) \log(m-1)  ]   }{ (m [m^{\alpha} - (m-1)^{\alpha}] -\; m)^2  },\\
    &\stackrel{(a)}{=} - \bigg[ \dfrac{(m^2)(m-1)((m-1)^{\alpha -1}-1)(m^{\alpha -1} -1)}{  (\alpha -1)(m [m^{\alpha} - (m-1)^{\alpha}] -\; m)^2 } \bigg] \\
    & \ \ \ \ \ \ \ \ \cdot \bigg[  \dfrac{ (m^{\alpha -1})\log(m^{\alpha -1}) }{ (m^{\alpha -1} -1) }  - \dfrac{ ((m-1)^{\alpha -1})\log((m-1)^{\alpha -1}) }{ ((m-1)^{\alpha -1} -1) } \bigg],\\
    &\stackrel{(b)}{\leq} 0,
\end{align*}
where $(a)$ follows from writing $\log(m) = \dfrac{\log(m^{\alpha -1})}{\alpha -1}$, and 
$(b)$ follows from the following three observations i) $\dfrac{x\log(x)}{x-1}$ is an increasing function, ii) $m^{\alpha -1} \geq (m-1)^{\alpha -1} $ and iii) the first term is positive. Note that $\log(m-1)$ is not defined for $m=1$ but for $m=1$ observe that $\Theta(\alpha)$ is always $0$.
Therefore $\forall \alpha \geq 2$ $h(m,g) \leq 0.5$, since for 
$\alpha = 2$, $\Theta(\alpha) = \dfrac{m^2 -m}{(m^2 -(m-1)^2 ) -m)} = 0.5$.
\end{proof}

\end{proof}

\end{document}